\documentclass[11pt,usletter]{article}
\usepackage{jheppub}
\pdfoutput=1

\usepackage{amsthm}
\usepackage{url}
\usepackage{comment}
\usepackage{physics}
\usepackage{graphicx}
\usepackage{hyperref}
\usepackage{tikz-cd}
\usepackage{color}

\def\ba{\begin{align}}
\def\ea{\end{align}}
\def\be{\begin{equation}}
\def\ee{\end{equation}}
\def\bea{\begin{eqnarray}}
\def\eea{\end{eqnarray}}

\theoremstyle{plain}
\newtheorem{thm}{Theorem}[section]
\newtheorem{lem}[thm]{Lemma}

\theoremstyle{definition}

\theoremstyle{remark}

% Editing Markup -------------------------------------------------------

%% Editing markup macros
%%   \rtxt{...}    Mark text to be removed
%%   \atxt{...}    Mark text that has been added
%%   \note{...}    For comments

%\usepackage{ulem}
%\definecolor{DarkGreen}{rgb}{0.0,0.5,0.0}
%\newcommand{\rtxt}[1]{\textcolor{red}{\sout{#1}}}
%\newcommand{\atxt}[1]{\textcolor{blue}{#1}}
%\newcommand{\note}[1]{\textcolor{DarkGreen}{\textbf{(#1)}}}
% Editing Markup -------------------------------------------------------
\hbox{CALT-TH-2019-006}
\begin{document}

\title{Note on global symmetry and SYK model}

\date{\today}

\author[a,b]{Junyu Liu,}
\author[c,d]{and Yehao Zhou}

\affiliation[a]{Walter Burke Institute for Theoretical Physics,\\ California Institute of Technology, Pasadena, California 91125, USA}
\affiliation[b]{Institute for Quantum Information and Matter, California Institute of Technology, Pasadena, CA 91125, USA}
\affiliation[c]{Perimeter Institute for Theoretical Physics,\\ Waterloo, ON N2L 2Y5, Canada}
\affiliation[d]{Department of Physics \& Astronomy, University of Waterloo,\\ Waterloo, ON N2L 3G1, Canada}

\emailAdd{jliu2@caltech.edu}
\emailAdd{yzhou3@perimeterinstitute.ca}

\abstract{The goal of this note is to explore the behavior of effective action in the SYK model with general continuous global symmetries. A global symmetry will decompose the whole Hamiltonian of a many-body system to several single charge sectors. For the SYK model, the effective action near the saddle point is given as the free product of the Schwarzian action part and the free action of the group element moving in the group manifold. With a detailed analysis in the free sigma model, we prove a modified version of Peter-Weyl theorem that works for generic spin structure. As a conclusion, we could make a comparison between the thermodynamics and the spectral form factors between the whole theory and the single charge sector, to make predictions on the SYK model and see how symmetry affects the chaotic behavior in certain timescales.
}

%\pacs{98.80.Cq}

\maketitle
\flushbottom

\section{Introduction}
For an isolated quantum statistical system, we will assign a time-independent Hamiltonian $H$. If we have a global symmetry $Q$, then we have
\begin{align}
\left[H,Q\right]=0
\end{align}
thus, $Q$ will help to decompose the eigenspace of Hamiltonian into multiple subspaces, characterized by different eigenvalues of $Q$. If we call $Q$ as the charge operator, different eigenspaces could be called as charge sectors.

If the system is in a thermal bath with inverse temperature $\beta$, one could assign a partition function
\begin{align}
Z(\beta ) = {\rm{Tr}}\left( {{e^{ - \beta H}}} \right)
\end{align}
However, if the system also has a fixed chemical potential $\phi$\footnote{We follow the notation \cite{Stanford:2017thb} here, while in another convention we may not introduce the imaginary unit $i$. The notation we use will be convenient when performing transformation between canonical and grand canonical ensemble, where we could use Fourier transform instead of Laplace transform.}, one could study the system as a grand canonical ensemble
\begin{align}
Z(\beta|\phi ) = {\rm{Tr}}\left( {{e^{ - \beta H + i\phi Q}}} \right)
\end{align}
One can transform the partition function, from grand canonical ensemble with fixed $\beta$ and $\phi$, to canonical ensemble with fixed $\beta$ and $\mu$. Here, a fixed $\mu$ means that we are only considering the subspace where $Q$'s eigenvalue is restricted to $\mu$. Namely, we are addressing the partition function in a single charge sector. 

The roles of $\mu$ and $\phi$ are simply related by Fourier conjugations. Let us take $\text{U}(1)$ charge as an example, where we have
\begin{align}
Z(\beta ,\mu) = \int_0^{2\pi } {\frac{{d\phi }}{{2\pi }}{e^{ - i\phi \mu}}} Z(\beta|\phi )
\end{align}
In this paper, we will apply this basic knowledge to a specific system, the Sachdev-Ye-Kitaev(SYK) model associated with a global, continues symmetry $G$.  The discovery of SYK model \cite{Sachdev:1992fk,A1,A2,A3,A4,Maldacena:2016hyu} opens a novel research direction towards quantum chaos in quantum gravity (see also \cite{Almheiri:2014cka,Fu:2016yrv,Polchinski:2016xgd,Jevicki:2016bwu,Jensen:2016pah,Maldacena:2016upp,Fu:2016vas,Witten:2016iux,Davison:2016ngz}). The SYK model is nearly conformal and maximal chaotic in certain limit, which is conjectured to reflect some features in the near $\text{AdS}_2$ gravity and black hole physics. A concrete and complete study of this model and related generalizations (for instance, supersymmetric generalization in \cite{Fu:2016vas}) is believed to provide some mysterious features in quantum gravity and holography. 

In SYK model (or more generally, its various generalizations), the $1/N$ fluctuations above the saddle point solution is captured by an effective action. The action has a Schwarzian derivative 
\begin{align}
{S_\varphi }[\varphi ] =  - \frac{1}{{g_c^2}}\int_0^{2\pi } {d\tau {\rm{Sch}}\left\{ {\tan \frac{\varphi }{2},\tau } \right\}} 
\end{align}
where $\varphi(\tau)$ is the fluctuation field above the saddle point, where the Schwarzian derivative is defined as
\begin{align}
{\rm{Sch}}\left( {f,z} \right) = \frac{{f'''(z)}}{{f'(z)}} - \frac{3}{2}{\left( {\frac{{f''(z)}}{{f'(z)}}} \right)^2}
\end{align}
and $1/g_c^2$ is the coupling scales as $N/\beta J$, where $J$ is the randomness of the model, $\beta$ is the inverse temperature, and $N$ is number of fermions in the model.

The Schwarzian action is significant in the sense of giving a maximal chaotic exponent. One can evaluate the partition function of this action and compute thermodynamical variables in the low energy limit by computing the one loop determinant. The discovery of Stanford-Witten localization \cite{Stanford:2017thb} shows that this partition function is one-loop exact. Namely, one can trust the calculation even in the strong coupling case. The dependence with the temperature in the one loop partition function could be obtained in the density of states, and also the spectral form factor \cite{You:2016ldz,Cotler:2016fpe,Cotler:2017jue,Liu:2018hlr,Saad:2018bqo}.

In this paper, we are interested in the case where the SYK model is associated with a global symmetry $G$. In the simplest case, we will discuss a direct product between the Schwarzian part and the sigma model corresponding to the group $G$. The structure of the action is discussed briefly in Section 3.2 in \cite{Stanford:2017thb}. In this case, the whole action is written as a sum of the Schwarzian part and the phase field capturing the global symmetry $G$, which is free moving in the group manifold and is known to be one-loop exact \cite{Marinov:1979gm,Picken:1988ev,Chu:1994hm}. Because it is a direct product among two manifolds when performing the path integral, the one-loop localization follows trivially and rigorously for the whole theory. 

One famous example for the symmetrized SYK model is the complex SYK model (the Sachdev-Ye model, see \cite{Sachdev:1992fk,Fu:2016vas}, and some recent studies \cite{Bulycheva:2017uqj,Chaturvedi:2018uov,Bhattacharya:2018nrw}), where $G=\text{U}(1)$. In this theory, the free action on $\text{U}(1)$ \cite{pc,appear,Sachdev:2019bjn,pc2} will contribute some extra effects to the Schwarzian part, and create a new temperature dependence \cite{Davison:2016ngz}. One can also construct some more generic Lie groups \cite{Gross:2016kjj,Yoon:2017nig,Narayan:2017hvh}, and various global symmetries may also appear in tensor model, an analog of SYK without disorder but with similar large $N$ dynamics (see \cite{Witten:2016iux,Klebanov:2016xxf,Klebanov:2017nlk,Giombi:2017dtl,Bulycheva:2017ilt,Klebanov:2018nfp,Giombi:2018qgp,Pakrouski:2018jcc} for reference). 

Why we need extra symmetries? The original SYK model is very successful in the sense of maximal chaos, and capture part of near horizon physics in $\text{AdS}_2$. However, such a quantum mechanical example in one dimension is special, and it is not completely clear how a full semi-classical dual theory with all sectors including gravitational sector, should emerge from such a theory. On the one hand, deeper insights in near $\text{AdS}_2$ geometry are needed, but on the other hand, one may consider constructing and studying deeper alternative models that could capture features we learn from SYK (for instance,  easy to solve and maximal chaotic), and could have a more clear dual picture and work for higher dimensions. Sometimes, extra symmetries are hard to avoid in those generalizations, due to more complicated symmetries of the dual black hole horizon we need, or due to our current limited understanding about holography (for instance, supersymmetry). Moreover, other symmetries may lead to interpretations of symmetry and charge in gravity, and some previous discussions about Kerr/CFT \cite{Guica:2008mu,Anninos:2017cnw}. Although we didn't study any specific models in this paper, we interpret the current study as the first step towards more detailed features among the current and future SYK-like models. Moreover, this paper only studies global continuous symmetries. For discrete symmetries (in general chaotic systems see \cite{thesis}) and supersymmetry (for instance, see \cite{Li:2017hdt,Stanford:2017thb,Kanazawa:2017dpd,Hunter-Jones:2017crg}), it might be valuable to study deeper following similar spirit, using technologies from condensed matter physics, quantum information and quantum gravity (for recent discussions about symmetries in quantum field theory and quantum gravity, see \cite{Harlow:2018tng,Harlow:2018jwu,Harlowtalk,toa}. For quantum circuits and black hole thought experiments with $\text{U}(1)$ symmetry, see \cite{Rakovszky:2017qit,Khemani:2017nda,beni}).

In this paper, we will systematically discuss the symmetrized effective action, keeping $G$ to be general. We will compute the expressions $Z(\beta,\mu)$ and $Z(\beta|\phi)$ explicitly with various examples (as a summary, see Section \ref{fe}), and study their predictions on the chaotic and thermodynamical observables (as a sketch, see Table \ref{tb1} and \ref{tb2}).

The paper is organized as the following. In Section \ref{ref} we will provide a simple review of the SYK model and their generalizations. In Section \ref{partition} we will discuss the computation of the partition function mostly in the free theory in detail. In Section \ref{pred} we will discuss predictions in SYK-like models. In Section \ref{conc} we arrive at a conclusion and discussion.

\section{About SYK model}\label{ref}

The (majonara) SYK model is a one-dimensional condensed matter model with $N$ majonara fermions. The model has a disorder average over non-local coupling,
\begin{align}
H = \sum\limits_{i < j < k < l} {{J_{ijkl}}{\psi ^i}{\psi ^j}{\psi ^k}{\psi ^l}} 
\end{align}
(we write the four-local case for simplicity as the simplest example) where $i,j,k,l$ is ranging from $1$ to $N$, and $\psi$s are majonara fermions. The coupling is a Gaussian random variable
\begin{align}
\left\langle {{J_{ijkl}}} \right\rangle  = 0~~~~~~\left\langle {J_{ijkl}^2} \right\rangle  = \frac{{6{J^2}}}{{{N^3}}}
\end{align}
where $J$ is a positive constant which sets the scale where the dimensionless coupling is $\beta J$.  In the large $N$ and IR limit $1\ll \beta J\ll N$, one can show that the large $N$ solution of the two point function has the $\text{SL}(2,\mathbb{R})$ covariance. In the strict IR limit, the theory has the reparametrization symmetry ($\text{diff}(S^1)$), so the space of Nambu-Goldstone bosons is $\text{diff}(S^1)/\text{SL}(2,\mathbb{R})$.

One can study the effective field theory for reparametrization mode $\varphi \in \text{diff}(S^1)/\text{SL}(2,\mathbb{R})$. The theory is described by the Schwarzian action 
\begin{align}
{S_\varphi }[\varphi ] =  - \frac{1}{{g_c^2}}\int_0^{2\pi } {d\tau {\rm{Sch}}\left\{ {\tan \frac{\varphi }{2},\tau } \right\}} 
\end{align}
where $\frac{1}{{g_c^2}} = \frac{{2\pi N\alpha }}{{\beta J}}$, and $\alpha$ is a constant that has been computed numerically. The partition function of this action is shown to be one-loop exact. The one-loop partition function is written as 
\begin{align}
{Z_\varphi }(\beta ) \sim \frac{1}{{{{(\beta J)}^{3/2}}}}\exp (\frac{\pi }{{g_c^2}})
\end{align}
The dependence $1/(\beta J)^{3/2}$ determines the speed of scrambling in the observables like analytic continued partition function and the spectral form factor of the theory. (Several recent papers address the study of the Schwarzian action, see \cite{Mertens:2017mtv,Mertens:2018fds,Qi:2018rqm,Lam:2018pvp,Blommaert:2018oro}.)

Now we wish to understand how a global symmetry will change the scaling of the partition function. Although we wish to discuss a general symmetry group, the $\text{U(1)}$ case will be the simplest example. It will show up in the complex SYK model, which is defined as
\begin{align}
{H} = \sum\limits_{1 \le {i_1} < {i_2} \le N} {\sum\limits_{1 \le {i_3} < {i_4} \le N} {{J_{{i_1}.{i_2},{i_3},{i_4}}}f_{{i_1}}^\dag f_{{i_2}}^\dag {f_{{i_3}}}{f_{{i_4}}}} } 
\end{align}
where $f$s are Dirac fermions, and $J$s satisfy 
\begin{align}
{J_{{i_1}.{i_2},{i_3},{i_4}}} = J_{{i_3},{i_4},{i_1}.{i_2}}^*~~~~~~\left\langle {{{\left| {{J_{{i_1}.{i_2},{i_3},{i_4}}}} \right|}^2}} \right\rangle  = \frac{{4{J^2}}}{{{N^3}}}
\end{align}
In this model, there is an $\text{U}(1)$ symmetry and the charge is conserved. One can define the fermionic charge 
\begin{align}
\mathcal{Q}=\frac{1}{N}\sum_i(f_i^\dagger f_i-\frac{1}{2})~~~~~~[\mathcal{Q},H]=0
\end{align}
The paper \cite{Davison:2016ngz} studies the model in detail. Here we will briefly describe its effective field theory. The effective action is written as
\begin{align}
& S={{S}_{\psi }}+{{S}_{\varphi}} \nonumber\\
& {{S}_{\psi}}=\frac{K}{4}\int_{0}^{\beta }{d\tau {{\left( {{\partial }_{\tau }}\tilde{\phi} +\frac{2\pi i\mathcal{E}}{\beta }{{\partial }_{\tau }}\varphi \right)}^{2}}}\nonumber\\
& {{S}_{\varphi }}=-\frac{\gamma }{4{{\pi }^{2}}}\int_{0}^{\beta }{d\tau \text{Sch}\left\{ \tan \frac{\pi \varphi}{\beta },\tau  \right\}} 
\end{align}
where $K$ and $\gamma$ are some thermodynamical quantities which could be computed numerically and they scale as $N/J$. Here we notice that ${{S}_{\varphi }}$ is the same for the Schwarzian action of the majonara SYK model. Here $\varphi(\tau)=\tau+\delta \varphi(\tau)$ is the reparametrization, and $\tilde{\phi}$ is a phase field capturing the $\text{U}(1) $ symmetry, and it has the periodicity $\tilde{\phi}\sim \tilde{\phi}+2\pi$. The constant $\mathcal{E}$ is a thermodynamical quantity that is defined as
\begin{align}
2\pi \mathcal{E}=\frac{d\mathcal{S}(\mathcal{Q})}{d\mathcal{Q}}
\end{align}
where $\mathcal{S}$ is the entropy, and we could define a shift of the field
\begin{align}
\psi =\tilde{\phi}+\frac{2\pi i\mathcal{E}}{\beta }\varphi
\end{align}
So we have
\begin{align}
{{S}_{\psi }}=\frac{K}{4}\int_{0}^{\beta }{d\tau {{\left( {{\partial }_{\tau }}\psi  \right)}^{2}}}
\end{align}
The periodicity for $\psi$ is still $2\pi$. One can send $\tau\to 2\pi \tau/\beta$, such that these integrals become
\begin{align}
  & {{S}_{\psi }}=\frac{K\pi }{2\beta }\int_{0}^{2\pi }{d\tau {{\left( {{\partial }_{\tau }}\psi  \right)}^{2}}} \nonumber\\
 & {{S}_{\varphi }}=-\frac{\gamma }{2\pi \beta }\int_{0}^{2\pi }{d\tau \text{Sch}\left\{ \tan \frac{\varphi}{2},\tau  \right\}} 
\end{align}
For symmetry groups more general than $\text{U}(1)$, models are precisely constructed in, for instance, \cite{Gross:2016kjj,Yoon:2017nig,Narayan:2017hvh}. In those models, the form of the effective action is generic: a Schwarzian mode for reparametrization symmetry, and a phase field moving in a group manifold. 

In this paper, we will study the one-loop partition function given from the following action
\begin{align}
&S = {S_f} + {S_\varphi }\nonumber\\
&{S_f} =  - \frac{{K\pi }}{2\beta}\int_0^{2\pi } {{\rm{Tr}}} {({f^{ - 1}}{\partial _\tau }f)^2}d\tau \nonumber\\
&{S_\varphi } =  - \frac{\gamma }{{2\pi \beta }}\int_0^{2\pi } {d\tau {\rm{Sch}}\left\{ {\tan \frac{\varphi }{2},\tau } \right\}} 
\end{align}
where $f$ is a phase field moving in a generic group $G$. We will study general $G$ with certain assumption: compact semisimple. For non-semisimple case, similar technologies could be used, and we will discuss $\text{U}(M)$ as examples. The goal of us is to understand the partition function generated by the above action, in the grand canonical and canonical ensembles, and to understand their relations, which is highly relying on the classic study of free sigma model moving on a Lie group.

For the range of $\beta$, the validity of the effective action for the SYK-like theory is $\beta J\gg 1$. In this paper, we are mostly interested in two possible ranges, $1\ll \beta J \ll N$, and $\beta J \gg N$ (namely, $K\gg \beta$ or $K\ll \beta$.)

\section{Studying the sigma model}\label{partition}
\subsection{\text{U}(1) as a warmup}
As a pedagogical example, we will start from $\text{U}(1)$ \cite{pc,appear,pc2,Sachdev:2019bjn}. $\text{U}(1)$ is not a semisimple group, and it has two different spin structures. In the complex SYK model, only trivial spin structures would present, while for $\mathcal{N}=2$ supersymmetric SYK model \cite{Fu:2016vas,Stanford:2017thb}, the spin structure depends on if the total number of particles is even or odd.

Using the complex SYK model notation in the previous section, we write down the sigma model for \text{U}(1) as 
 \begin{align}
 {{S}_{\psi }}=\frac{K\pi }{2\beta }\int_{0}^{2\pi }{d\tau {{\left( {{\partial }_{\tau }}\psi  \right)}^{2}}} 
 \end{align}
By solving the equation of motion, we could have infinite number of saddle points 
\begin{align}
{\psi _{\hat{n}}} =\hat{n}\tau 
\end{align}
with the corresponding action
\begin{align}
{S_\psi } = \frac{{K\pi }}{2\beta }\int_0^{2\pi } {d\tau {{\left( {{\partial _\tau }(\hat{n}\tau )} \right)}^2}}  = \frac{{{\hat{n}^2}{\pi ^2}K}}{\beta } 
\end{align}
Now we start to compute the one-loop partition function. We study the perturbation around the saddle point
\begin{align}
  & {{\psi }_{\hat{n}}}=\hat{n}\tau +\delta \psi  \nonumber\\
 & \delta \psi =\sum\limits_{\hat{p}}{{{\psi }_{\hat{p}}}{{e}^{i\hat{p}\tau }}} 
\end{align}
Thus we get
\begin{align}
& \frac{K\pi }{2\beta }\int_{0}^{2\pi }{d\tau {{\left( {{\partial }_{\tau }}\delta \psi  \right)}^{2}}} =-\frac{K{{\pi }^{2}}}{\beta }\sum\limits_{\hat{p}\in \mathbb{Z}}{{{\hat{p}}^{2}}{{\psi }_{\hat{p}}}{{\psi }_{-\hat{p}}}} 
\end{align}
Using the zeta-function regularization, and cutting out the zero mode $p=0$, we would get
\begin{align}
\log {{Z}_{\psi }}\sim -\sum\limits_{\hat{p}\in {{\mathbb{Z}}^{+}}}{\log \left( -\frac{K{{\pi }^{2}}}{\beta }{{\hat{p}}^{2}} \right)}\sim -\frac{1}{2}\log (\frac{K}{\beta})
\end{align}
Thus, a single saddle point parameterized by $\hat{n}$ will contribute the partition function by
\begin{align}
{Z_{\psi ,\hat n}}\sim {\left( {\frac{K}{\beta }} \right)^{1/2}}\exp ( - \frac{{K{\pi ^2}{{\hat n}^2}}}{\beta })
\end{align}
$\text{U}(1)$ has two spin structures: the trivial one $\sigma_0$, and the M\"obius $\sigma_1$. Those correspond to even and odd particles. Using this, we could compute the whole partition function, with zero chemical potential, by
\begin{align}
&{Z_{{\sigma _0}}}(\beta |\phi  = 0) = \sum\limits_{\hat n} {{Z_{\hat n}}}  \sim \sum\limits_{\hat n} {{{\left( {\frac{K}{\beta }} \right)}^{1/2}}\exp ( - \frac{{K{\pi ^2}{{\hat n}^2}}}{\beta })} \nonumber\\
&\sim {\left( {\frac{K}{\beta }} \right)^{1/2}}{\vartheta _3}(0,\exp ( - \frac{{{\pi ^2}K}}{\beta })) \sim {\left( {\frac{\beta }{K}} \right)^{1/2}}{\vartheta _3}(0,\exp ( - \frac{\beta }{K})) \nonumber\\
&{Z_{{\sigma _1}}}(\beta |\phi  = 0) = \sum\limits_{\hat n} {{{( - 1)}^{\hat n}}{Z_{\hat n}}}  \sim {\left( {\frac{K}{\beta }} \right)^{1/2}}\sum\limits_{\hat n} {{{( - 1)}^{\hat n}}\exp ( - \frac{{K{\pi ^2}{{\hat n}^2}}}{\beta })}  \nonumber\\
&\sim {\left( {\frac{K}{\beta }} \right)^{1/2}}{\vartheta _4}(0,\exp ( - \frac{{{\pi ^2}K}}{\beta })) \sim {\left( {\frac{\beta }{K}} \right)^{1/2}}{\vartheta _4}(0,\exp ( - \frac{\beta }{K}))
\end{align}
where $\vartheta_a(u,q)$ is the elliptic theta function 
\begin{align}
&{\vartheta _3}(u,q) = \sum\limits_n {{q^{{n^2}}}\exp (2ni\pi z)}  \equiv \sum\limits_n {{q^{{n^2}}}{\eta ^n}}=\sum\limits_n {\exp ({n^2}\pi i\tau )\exp (2niz)}  \nonumber\\
&{\vartheta _4}(u,q) = \sum\limits_n {{{( - 1)}^n}{q^2}\exp (2ni\pi z)}  = \sum\limits_n {{{( - 1)}^n}{q^{{n^2}}}{\eta ^n}} \nonumber\\
&{\vartheta _2}(u,q) = \sum\limits_n {{q^{{{(n + 1/2)}^2}}}\exp ((2n + 1)i\pi z)}  = \sum\limits_n {{q^{{{(n + 1/2)}^2}}}{\eta ^{n + 1/2}}} 
\end{align}
And we have used the Jacobi identity for elliptic theta function to obtain the final formula of those expressions
\begin{align}
{\vartheta _3}(\frac{z}{\tau }, - \frac{1}{\tau }) = {( - i\tau )^{1/2}}\exp (\frac{\pi }{\tau }i{z^2}){\vartheta _3}(z,\tau )\nonumber\\
{\vartheta _2}(\frac{z}{\tau }, - \frac{1}{\tau }) = {( - i\tau )^{1/2}}\exp (\frac{\pi }{\tau }i{z^2}){\vartheta _4}(z,\tau )
\end{align}
For the whole partition function with the chemical potential $\phi$, we have \cite{Stanford:2017thb}
\begin{align}
{Z_{\hat n}}(\beta ,\phi ) = {Z_{\hat n + \phi /2\pi }}(\beta ,\phi  = 0)
\end{align}
Thus we obtain
\begin{align}
&{Z_{{\sigma _0}}}(\beta |\phi ) = \sum\limits_{\hat n} {{Z_{\hat n + \phi /2\pi }}(\beta ,\phi  = 0)} \nonumber\\
&\sim {\left( {\frac{K}{\beta }} \right)^{1/2}}\sum\limits_{\hat n} {\exp ( - \frac{{{\pi ^2}K{{(\hat n + \phi /2\pi )}^2}}}{\beta })}  \sim {\left( {\frac{\beta }{K}} \right)^{1/2}}{\vartheta _3}(\frac{\phi }{2},\exp ( - \frac{\beta }{K}))\nonumber\\
&{Z_{{\sigma _1}}}(\beta |\phi ) = \sum\limits_{\hat n} {{{( - 1)}^{\hat n}}{Z_{\hat n + \phi /2\pi }}(\beta ,\phi  = 0)} \nonumber\\
&= {\left( {\frac{K}{\beta }} \right)^{1/2}}\sum\limits_{\hat n} {{{( - 1)}^{\hat n}}\exp ( - \frac{{{\pi ^2}K{{(\hat n + \phi /2\pi )}^2}}}{\beta })}  = {\left( {\frac{\beta }{K}} \right)^{1/2}}{\vartheta _2}(\frac{\phi }{2},\exp ( - \frac{\beta }{K}))
\end{align}
Now we apply the Poisson resummation formula to obtain the partition function in the single charge sector. It is easily shown that
\begin{align}
&\sum\limits_{\hat n} {\int_0^{2\pi } {\frac{{d\alpha }}{{2\pi }}\exp ( - i\alpha m)} f(\hat n + \frac{\alpha }{{2\pi }})}  = \int_\mathbb{R} {\exp ( - 2\pi imu)f(u)} du\text{ for integer }m\nonumber\\
&\sum\limits_{\hat n} {\int_0^{2\pi } {\frac{{d\alpha }}{{2\pi }}\exp ( - i\alpha m)} f(\hat n + \frac{\alpha }{{2\pi }})} {( - 1)^n}{\rm{ }} = \int_\mathbb{R} {\exp ( - 2\pi imu)f(u)du} \text{ for half integer }m
\end{align}
Thus, the Fourier transformation formula gives
\begin{align}
Z(\beta ,\mu )\sim  \exp \left( { - \frac{{{\mu ^2}\beta }}{{K}}} \right)
\end{align}
The above computation shows a toy example about partition functions in various cases. Now we could make some simple analysis on those results. 

For $K\gg \beta$, firstly, for the single charge sector results, we will see that for $\mu  \ll {\left( {\frac{K}{\beta }} \right)^{1/2}}$, the partition function is nearly
\begin{align}
Z(\beta ,\mu )\sim 1
\end{align}
while for $\mu  \sim {\left( {\frac{K}{\beta }} \right)^{1/2}}$ or even larger, the result will start to get exponential decaying when $\beta$ increases as 
\begin{align}
Z(\beta ,\mu )\sim\exp \left( { - \frac{{{\mu ^2}\beta }}{{K}}} \right)
\end{align}
Note that there is no leading polynomial dependence on $\beta$.

Secondly, for the whole charge sector with a chemical potential, we have two cases. Firstly, if $\phi/2\pi=n_\phi$ is an integer, the dominated result is simply given by 
\begin{align}
Z(\beta|\phi)\sim{\left( {\frac{K}{\beta }} \right)^{1/2}}
\end{align}
for both spin structures. If $\phi/2\pi$ is not an integer, we write $\text{Round}(x)$ as integer closest to $x$, then we have
\begin{align}
{Z_\sigma }(\beta |\phi )\sim {\left( {\frac{K}{\beta }} \right)^{1/2}}{e^{ - \frac{K}{{4\beta }}{{(\phi  - 2\pi {\rm{Round}}\left( {\frac{\phi }{{2\pi }}} \right))}^2}}}
\end{align}
With similar but more technical analysis, we will generalize the above computations in a general semisimple compact group $G$.

We also notice that for $K \ll \beta$, both canonical and grand canonical ensemble results give constant contribution $\mathcal{O}(1)$ \cite{pcds}. Namely, we could not observe any features from global symmetry sectors. Going back to SYK-like models, we will recover the Schwarzian theory. Thus, this indicates that in the single charge sector one could obtain random matrix theory classification. In case of complex SYK model, it is worked out in \cite{You:2016ldz} by level statistics, and in $\mathcal{N}=2$ supersymmetric SYK model the classification is addressed in \cite{Kanazawa:2017dpd}.

\subsection{A generalized Peter-Weyl theorem}
Now we will study sigma models on the group manifold $G$\footnote{For related mathematics, see \cite{Knapp:LieGroup,Humphreys:LieAlg}.}. The sigma model on a group manifold with a fixed spin structure is described by the Lagrangian
\begin{align}
S_f[f]=-\frac{K\pi}{2}\int _0^{2\pi}\text{Tr}(f^{-1}\partial _t f)^2dt
\end{align}
with respect to the boundary condition that $\tilde f(2\pi)=\tilde f(0)g$. Here $f$ is a group element of $G$, and $\tilde f$ is the lift of $f$ from $G$ to universal cover $\widetilde G$, and $g$ is a central element in $\widetilde G$ such that $\sigma(g)=1$ in $\mathbb Z_2$. 

More precisely, $g$ lives in the kernel of $\widetilde G\to G$, which is a discrete normal subgroup. We claim that
\begin{thm}
\label{thm:small}
$g$ lives in the center of $\widetilde G$.
\end{thm}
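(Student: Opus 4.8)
The plan is to prove the general fact that any \emph{discrete normal} subgroup $\Gamma$ of a \emph{connected} topological group $H$ is contained in the center $Z(H)$, and then specialize to $H=\widetilde{G}$ with $\Gamma=\ker(\widetilde{G}\to G)$. The excerpt already identifies this kernel as a discrete normal subgroup, and $\widetilde{G}$, being the universal cover of the connected group $G$, is itself connected; so the general statement applies and yields $g\in Z(\widetilde{G})$.

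To prove the general fact, I would fix $g\in\Gamma$ and consider the conjugation map
\begin{align}
c_g:H\to H,\qquad c_g(x)=xgx^{-1},
\end{align}
which is continuous since multiplication and inversion are continuous in a topological group. Normality of $\Gamma$ gives $c_g(x)\in\Gamma$ for all $x\in H$, so $c_g$ restricts to a continuous map $H\to\Gamma$.

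Now I would invoke connectedness: the continuous image $c_g(H)$ of the connected space $H$ is connected, but the only connected subsets of the discrete space $\Gamma$ are singletons. Hence $c_g(H)$ is a single point, and evaluating at the identity identifies it as $c_g(e)=g$. Therefore $xgx^{-1}=g$ for every $x\in H$, i.e.\ $g$ commutes with all of $H$, which is exactly $g\in Z(H)$. Applied to $H=\widetilde{G}$ this proves the claim for the element $g$ in the statement (and in fact for the entire kernel).

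As for the main obstacle, there is no genuine difficulty or lengthy computation here — the whole content is the connectedness argument above. The only points requiring care are the two hypotheses: that the kernel is discrete (which follows because the covering projection $\widetilde{G}\to G$ is a local diffeomorphism, forcing its kernel to be a discrete subgroup) and that $\widetilde{G}$ is connected (automatic for the universal cover of a connected group). Both are standard, and the excerpt already asserts discreteness, so once the conjugation map is set up the proof is immediate.
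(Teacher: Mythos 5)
Your proof is correct and takes essentially the same approach as the paper: both arguments exploit that conjugating $g$ by elements of the connected group $\widetilde{G}$ produces a connected subset of the discrete kernel, which must therefore be the single point $\{g\}$. The only cosmetic difference is that the paper phrases this with an explicit path $h(t)$ from the identity to $h$ and cites finiteness of the kernel, whereas you apply connectedness to the image of the conjugation map $c_g$ and need only discreteness --- a marginally cleaner formulation of the same idea.
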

\begin{proof}
Every element of form $hgh^{-1}$ is in the kernel of $\widetilde G\to G$, connect $h$ with the identity element of $\widetilde G$ via a path $h(t)$ with $h(0)=\mathbf{1}_{\widetilde G}$ and $h(1)=h$, then the path $h(t)gh(t)^{-1}$ connects $g$ and $hgh^{-1}$, but the kernel of $\widetilde G\to G$ is finite, hence $g$ and $hgh^{-1}$ are equal, i.e. $g$ is central. 
\end{proof}

Furthermore, $g$ can be identified with an element of the fundamental group of $G$ via connecting the identity of $\widetilde G$ with $g$ by a path and projecting down to $G$, the projection of that path is a loop because the head and tail are mapped to the same point (identity of $G$). 

We alos note that preimage of central element in $G$ is still central, in fact since $Z(G)$ is normal dicrete so is its preimage. Conversely the image of central element in $\widetilde G$ is obviously central. Thus there is a surjective homomorphism $Z(\widetilde G)\to Z(G)$, with the same kernel as $\widetilde G\to G$, which is naturally identified with the fundamental group $\pi_1(G)$
\begin{align}\label{center=pi_1}
\pi_1(G)\cong\text{Ker} (Z(\widetilde G)\to Z(G))
\end{align}

The partition function $\text{tr}( e^{-\beta H})$ is the same as the propagator of quantum mechanics on $G$ with Hamiltonian $\widetilde H=-\Delta /2K\pi$, moving from identity element of $G$ to $g$, with duration $2\pi \beta$,
\begin{align}
Z_{\sigma}(\beta)=\langle e^{-\beta H}\rangle=\int_{\tilde f(0) = \tilde f(2\pi )} {[Df]{e^{ - 2\pi \beta \tilde H(f)}}} =\langle \mathbf{1}_G|e^{-2\pi\beta \widetilde H}|\mathbf{1}_G\rangle
\end{align}
where $\Delta$ is the Laplace-Beltrami operator on $G$ associated to the Killing metric $h_{\mu \nu}$\footnote{Killing metric is defined at the tangent space of identity to be $\langle X,Y\rangle=\text{Tr}(\text{ad}(X)\text{ad}(Y))$, then pushforward to the tangent space at each element $g$ by left multiplication $L_g$ (or equivalently right multiplication $R_g$, because Killing metric at identity is invariant under adjoint action).}, defined in the usual way
\begin{align}\label{Laplacian}
\Delta f=\frac {1}{\sqrt{\text {det}(h)}}\partial _{\mu}\left(\sqrt{\text {det}(h)} \partial ^{\mu}f\right)
\end{align}
Note that the Laplace-Beltrami operator acts on the bundle of \textit{twisted} functions, i.e. the function $f$ in \ref{Laplacian} should be a local section of the complex line bundle $\mathcal L_{\sigma}$ coming from the spin structure $\sigma\in \text{H}^1(G,\mathbb Z_2)$\footnote{More precisely, an element $\sigma$ in $\text{H}^1(G,\mathbb Z_2)\cong \text{Map}(G,\text{B}\mathbb Z_2)$ determines a real line bundle, and tensoring with $\mathbb C$ gives rise to a complex line bundle. Equivalently, the representative of that complex line bundle in $\text{H}^2(G,\mathbb Z)$ is the image of $\sigma$ under the Bockstein homomorphism.}. The propagator is calculated by decomposing into eigenfunctions of $\widetilde H$, 
\begin{align}\label{Expansion}
\langle \mathbf{1}_G|e^{-2\pi\beta \widetilde H}|\mathbf{1}_G\rangle&=\sum _{i}\psi_i(\mathbf{1}_G)\bar {\psi}_n(\mathbf{1}_G)e^{-2\pi\beta E_i}
\end{align}
where $\psi_i$ is the eigenfunction of $\widetilde H$ with eigenvalue $E_i$. To give a description of these eigenfunctions, let's first assume that the spin structure is trivial so that the line bundle $\mathcal L_{\sigma}$ is trivial and functions are ordinary, i.e. not twisted. Recall the famous Peter-Weyl theorem:
\begin{thm}
\label{thm:PeterWeyl}
    $\bf{\operatorname{[Peter-Weyl]}}$ Let $G$ be a compact Lie group equipped with the Haar measure, then the Hilbert space of square-integrable functions on $G$ is a unitary representation of $G$ by the action $\pi(g)$
    \begin{align*}
        \pi(g):f(h)\mapsto f(g^{-1}h)
    \end{align*}
and has decomposition into finite dimensional irreducible representations:
    \begin{align}\label{PeterWeyl}
        L^2(G)=\bigoplus _{\lambda\in P(G)\cap P_+}V_{\lambda}^{\oplus \operatorname{dim}(V_{\lambda})}
    \end{align}
    Here $V_{\lambda}$ is the unitary irreducible representation of highest weight $\lambda$, and $P(G)$ is the weight lattice of $G$ \footnote{Weight lattice $P(G)$ is a lattice that labels all possible weights in the representations of $G$.} and $P_+$ is the dominant part of weight space $P(G)\otimes _{\mathbb Z} \mathbb R$ \footnote{Tensor over $\mathbb Z$ means forming a tensor product $\mathbb Z$-bilinearly, here $P(G)\otimes _{\mathbb Z} \mathbb R$ embeds the $P(G)$ lattice into a real linear space whose the dimension equals to the rank of the lattice. Dominant part $P_+$ is the domain in the weight space such that $\forall \lambda \in P_+, \langle \lambda,\alpha_i\rangle\ge 0$, where $\alpha_i$ runs through all positive roots. Dominant weights $P(G)\cap P_+$ are one to one correspond to unitary irreducible representations of $G$.}. This isomorphism is given by taking the matrix coefficients of each irreducible representation, more precisely, let $\lambda$ be a weight, and $\pi ^{\lambda}:G\to \text{U}(V_{\lambda})$ be the associated unitary irreducible representation with highest weight $\lambda$, $\{e_i\}$ be an orthonormal basis of $V_{\lambda}$ with Hermitian metric $(-,-)$, then following functions on $G$
    \begin{align}
        \pi ^{\lambda}_{ij}(g):=\sqrt{\operatorname{dim}(V_{\lambda})}(\pi ^{\lambda}(g)e_i,e_j)
    \end{align}
constitute an orthonormal basis for the direct summand $V_{\lambda}^{\oplus \operatorname{dim}(V_{\lambda})}$ in the decomposition \ref{PeterWeyl}.
\end{thm}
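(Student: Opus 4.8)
The plan is to follow the classical operator-theoretic route, exploiting the compactness of $G$ at the single essential analytic step, and then to repackage the output using the highest-weight theory of compact Lie groups. First I would record the structural facts: the Haar measure on a compact group is bi-invariant and finite, so $L^2(G)$ carries two commuting unitary regular representations; equivalently it is a unitary representation of $G\times G$ via $f(h)\mapsto f(g_1^{-1}hg_2)$, and the action $\pi(g)$ appearing in the statement is the restriction to the left factor. The goal is thus to decompose $L^2(G)$ as a $G\times G$-module and then restrict.

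The heart of the argument is to manufacture finite-dimensional invariant subspaces. For any $\phi\in L^2(G)$ with $\phi(h)=\overline{\phi(h^{-1})}$, the convolution operator $T_\phi f=\phi * f$ is self-adjoint, commutes with the right regular representation, and—because $G$ is compact, so that $\phi$ defines an $L^2(G\times G)$ integral kernel—is Hilbert--Schmidt, hence compact. The spectral theorem for compact self-adjoint operators then splits $L^2(G)$ into an orthogonal sum of finite-dimensional eigenspaces of $T_\phi$, each of them invariant under the right regular action. Letting $\phi$ run through an approximate identity shows that every continuous function, and hence a dense subspace of $L^2(G)$, is approximated inside such finite-dimensional invariant subspaces; decomposing those into irreducibles shows that matrix coefficients of finite-dimensional unitary representations are dense.

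Next I would invoke the Schur orthogonality relations: matrix coefficients of a fixed irreducible are mutually orthogonal with $L^2(G)$-norm squared $1/\dim(V_\lambda)$, and coefficients of inequivalent irreducibles are orthogonal. This is exactly the normalization making $\pi^\lambda_{ij}=\sqrt{\dim(V_\lambda)}(\pi^\lambda(g)e_i,e_j)$ orthonormal, and it identifies the span of $\{\pi^\lambda_{ij}\}_{i,j}$ with $V_\lambda\otimes\overline{V_\lambda}$ as a $G\times G$-module. Restricting to the left action $\pi(g)$ of the statement, only one tensor slot transforms (as $V_\lambda$) while the other becomes a $\dim(V_\lambda)$-dimensional multiplicity space carrying the trivial action, so this block is $V_\lambda^{\oplus\dim(V_\lambda)}$, producing the claimed multiplicity. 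Combining density with orthogonality and summing over inequivalent irreducibles yields the orthogonal decomposition $L^2(G)=\bigoplus_\lambda V_\lambda^{\oplus\dim(V_\lambda)}$. Finally, the Cartan--Weyl theorem of the highest weight identifies the inequivalent irreducible unitary representations of the compact Lie group $G$ with the dominant weights $P(G)\cap P_+$, which fixes the index set in the stated form.

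The main obstacle is the density step: one must show that finite-dimensional representations separate points of $L^2(G)$, i.e.\ that there are \emph{enough} of them. Everything there rests on the compactness of the convolution operators, which in turn relies on finiteness of the Haar measure, so this is the one place where compactness of $G$ is genuinely indispensable; the remaining ingredients—Schur orthogonality and the highest-weight classification—are formal consequences once density is secured.
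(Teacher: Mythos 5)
Your proposal is correct, but there is nothing in the paper to compare it against: the paper does not prove this statement at all. It explicitly \emph{recalls} Peter--Weyl as a classical background result and cites the standard references; the paper's actual proof content begins only afterwards, where it establishes the generalization to nontrivial spin structures (its Theorem on twisted sections of $\mathcal L_\sigma$) by descending functions from the universal cover $\widetilde G$ and using the lemma identifying $\operatorname{Hom}(\pi_1(G),\mathbb Z_2)$ with $(P(\mathfrak g)\cap \frac12 P(G))/P(G)$. What you have supplied is the standard operator-theoretic proof of the classical theorem: compactness of $G$ makes the self-adjoint convolution operators $T_\phi$ Hilbert--Schmidt, the spectral theorem plus an approximate identity gives density of matrix coefficients, Schur orthogonality fixes the normalization $\sqrt{\dim(V_\lambda)}$ and the block structure $V_\lambda\otimes \overline{V_\lambda}$, and the Cartan--Weyl highest-weight classification converts the index set into $P(G)\cap P_+$. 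This outline is sound, and you correctly isolate the density step as the only place where compactness is indispensable. Two minor imprecisions, neither fatal: the spectral theorem splits $L^2(G)$ into finite-dimensional eigenspaces only for the \emph{nonzero} eigenvalues of $T_\phi$ (the kernel of $T_\phi$ may be infinite-dimensional, which is why one argues through $\overline{\operatorname{Im}(T_\phi)}$ and density of continuous functions); and under the left translation action the span of the coefficients $\pi^\lambda_{ij}$ transforms as $\overline{V_\lambda}\cong V_\lambda^*$ rather than $V_\lambda$ in some conventions --- this is harmless for the stated decomposition because dualization permutes the dominant weights, but it is worth flagging when one asserts that a specific block ``is'' $V_\lambda^{\oplus\dim(V_\lambda)}$.
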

Consider a left invariant vector fields $X$ acting on $L^2(G)$, its Lie derivative on a function $f$ is by definition the infinitesimal generator of Lie group action on function $f$, thus it agrees with the action of Lie algebra element $X(\mathbf{1}_G)$
\begin{align*}
    \mathcal L_X f=\pi(X(\mathbf{1_G}))f
\end{align*}
and by associativity of the Lie algebra action, every differential operator $D$
which is constructed from left invariant vector fields ({where $n$ is the dimension of the group})
\begin{align*}
D=X_1X_2\cdots X_n
\end{align*}
acts on functions by
\begin{align*}
Df=\pi(X_1 (\mathbf{1}_G) \otimes X_2 (\mathbf{1}_G)\otimes\cdots X_n (\mathbf{1}_G))f
\end{align*}
here $X_1 (\mathbf{1}_G) \otimes X_2 (\mathbf{1}_G)\otimes\cdots X_n (\mathbf{1}_G) $ is regarded as an element in the universal enveloping algebra $\mathcal U(\mathfrak g)$. As a corollary, the Laplace-Beltrami operator $\Delta$, which equals to 
\begin{align*}
    \sum_i X_iX_i
\end{align*}
where $\{X_1,X_2,\cdots, X_n\}$ is an orthonormal basis (under Killing metric) for left invariant vector fields, acts on functions by 
\begin{align*}
    \Delta f=\sum _i \pi(X_i (\mathbf{1}_G) \otimes X_i (\mathbf{1}_G))f
\end{align*}
but $\{X_1 (\mathbf{1}_G),X_2 (\mathbf{1}_G),\cdots, X_n (\mathbf{1}_G)\}$ is an orthonormal basis (under Killing metric) for the Lie algebra $\mathfrak g$, so $\sum _i X_i (\mathbf{1}_G) \otimes X_i (\mathbf{1}_G) $ is the second order Casimir operator in $\mathcal U(\mathfrak g)$, and it acts on irreducible representation $V_{\lambda}$ by a scalar $C_2(\lambda)$ which equals to 
\begin{align}
\langle\lambda,\lambda+2\rho\rangle
\end{align}
where $\rho$ is one half of the sum of positive roots, and the inner product is the one induced from the Killing metric.

Now eigenfunction $\psi _i(g)$ in the expansion formula \ref{Expansion} is an one-to-one correspondence to $\pi ^{\lambda}_{ij}$, thus the \ref{Expansion} reads
\begin{align}\label{Expansion2_No_Spin}
    Z_{\sigma=0}(\beta)&=\sum_{\lambda \in P(G) \cap P_+}\sum_{i,j=1}^{\text{dim}(V_{\lambda})}\pi^{\lambda}_{ij}(\mathbf 1_G)\bar{\pi}^{\lambda}_{ij}(\mathbf 1_G) e^{-\beta C_2(\lambda)/K}\nonumber\\
    &= \sum_{\lambda \in P(G) \cap P_+}\sum_{i,j=1}^{\text{dim}(V_{\lambda})}\text{dim}(V_{\lambda})|(\pi^{\lambda}(\mathbf 1_G)e_i,e_j)|^2e^{-\beta C_2(\lambda)/K}
\end{align}
Note that $\pi^{\lambda}(\mathbf 1_G) $ is nothing but identity matrix in vector space $V_{\lambda}$, hence 
\begin{align}
    (\pi^{\lambda}(\mathbf 1_G)e_i,e_j)=(e_i,e_j)=\delta_{ij}
\end{align}
and \ref{Expansion2_No_Spin} reduces to 
\begin{align}\label{Expansion3_No_Spin}
Z_{\sigma=0}(\beta) =\sum_{\lambda\in P(G)\cap P_+}\text{dim}(V_{\lambda})^2e^{-\beta C_2(\lambda)/K}
\end{align}

Now we need to remove the assumption on the triviality of spin structure, i.e. the Hilbert space of the quantum mechanics on $G$ should be square integrable global sections of a nontrivial complex line bundle $\mathcal L_{\sigma}$. One expects that there should be a decomposition of the Hilbert space into direct sum of irreducible representations of $G$, similiar to the Theorem \ref{thm:PeterWeyl}. However this is not the case because there is no self-consistent action of $G$ on the Hilbert space $L^2(G,\mathcal L_{\sigma})$ such that it's compatible with the translation $L_{g^{-1}}:h\mapsto g^{-1}h$, i.e. $\mathcal L_{\sigma}$ is not a $G$-equivalent line bundle. Assume that there is a action of $G$ on $L^2(G,\mathcal L_{\sigma})$ compatible with translation, then it's represented by an isomorphism
\begin{align*}
a_g:L_{g^{-1}}^*\mathcal L_{\sigma}\cong \mathcal L_{\sigma}
\end{align*}
let $g$ run through the whole group and it amounts to an isomorphism between line bundles on $G\times G$
\begin{align}
a:m^*\mathcal L_{\sigma}\cong p_2^*\mathcal L_{\sigma}
\end{align}
in which $m$ is the multiplication map $(g_1,g_2)\mapsto g_1g_2$ and $p_2$ is the projection to the second coordinate $(g_1,g_2)\mapsto g_2$. Restricting to $G\times \{\mathbf{1}_G\}\subset G\times G$, there is an isomorphism 
\begin{align}
a|_{G\times \{\mathbf{1}_G\}}:\text{Id}_G^*\mathcal L_{\sigma}\cong \mathbf{1}_G^*\mathcal L_{\sigma}
\end{align}
$\mathbf{1}_G$ means collapsing $G$ to a point followed by embedding into the identity element $\mathbf{1}_G$. This is an isomorphism between $\mathcal L_{\sigma}$ and trivial bundle, a contradiction to the fact that $\mathcal L_{\sigma}$ is nontrivial.

This drawback is rescued by considering the $\widetilde G$-equivalent structure of $\mathcal L_{\sigma}$. In fact, the line bundle $\mathcal L_{\sigma}$ carries a canonical flat connection which comes from the construction: $\sigma\in \text{Hom}(\pi _1(G),\mathbb Z_2)$ determines a $\mathbb Z_2$-principal bundle which is obviously flat (there is no vertical direction), $\mathbb Z_2$'s action on $\mathbb C$ gives rise to a associated complex line bundle with a flat connection inherited from the $\mathbb Z_2$-principal bundle. Now we can define the action of $\widetilde G$ on $\mathcal L_{\sigma}$ by connecting an element $g\in\widetilde G$ with the identity element $\mathbf{1}_{\widetilde G}$ via a smooth path $g(t)$, and let the horizontal lift of the left multiplication $L_{g(t)}$ be the action of $g$, this does not depend on the choice of path because $\widetilde G$ is simply-connected and the connection is flat. This is obviously a group action because composition of any two elements $g_1$ and $g_2$ amounts to gluing path from $\mathbf{1}_{\widetilde G}$ to $g_2$ and path from $g_2$ to $g_1g_2$, which is a path from $\mathbf{1}_{\widetilde G}$ to $g_1g_2$.

Pull-back of $\mathcal L_{\sigma}$ to $\widetilde G$ is the trivial line bundle whose square integrable global sections have decomposition into irreducible representations
\begin{align}
L^2(\widetilde G)=\bigoplus _{\lambda\in P(\widetilde G)\cap P_+}V_{\lambda}^{\oplus \text{dim}(V_{\lambda})}
\end{align}
A question to be answered is: which $L^2(\widetilde G)$ function comes from a global section of $\mathcal L_{\sigma}$ on $G$? A necessary condition is that $f(g^{-1}h)=\sigma (g)f(h)$, $\forall h\in \widetilde G$ and $\forall g\in \text{Ker}(\widetilde G\to G)$. This comes from the monodromy that of any loop $\gamma(t)$ in $G$ is $\sigma ([\gamma(t)])\in \mathbb Z_2$, and $g$ is canonically identified with an element in $\pi_1(G)$ by connecting it with $\mathbf {1}_{\widetilde G}$ via a path and the monodromy generated the image of this path (which is a loop) is by definition the action of $g$ on the section. It turns out that this is also sufficient. Let $ \text{Ker}(\widetilde G\to G)$ acts on the trivial bundle via $g(h,u)=(gh,\sigma (g)u)$, $\forall h\in \widetilde G$ and $u\in \mathbb C$, then this action is compatible with left multiplication of $ \text{Ker}(\widetilde G\to G)$ on $\widetilde G$, and it also preserve the trivial connection, thus the trivial bundle descends to a line bundle $\mathcal L_{\sigma}'$ on $G$ with a flat connection. Monodromy of $\mathcal L_{\sigma}'$ is exactly $\sigma $, so $\mathcal L_{\sigma}'$ and $\mathcal L_{\sigma}$ have the same monodromy, indicating that they are isomorphic, since line bundles on $G$ are classified by $\text{Map}(G,\text{BU}(1))\cong \text{Hom}(\pi_1(G),\text{U}(1))$, i.e. monodromy. 

We know that $\text{Ker}(\widetilde G\to G)$ is a subgroup of the center of $\widetilde G$, so their action on irreducible representation $V_{\lambda}$ are scalars (Schur's Lemma),  it remains to pick out those $\lambda$'s such that these scalars are exactly $\sigma(g)$. It's attempting to extend the definition of $\sigma$ and let it act on the whole group $\widetilde G$ so that it corresponds to a weight, it turns out that this is possible, modulo weight lattice $P(G)$:
\begin{lem}
There is a canonical isomorphism
\begin{align}
\operatorname{Hom}(\pi_1(G),\mathbb Z_2)\cong (P(\mathfrak g)\cap \frac {1}{2} P(G))/P(G)
\end{align}
so any spin structure lifts to a weight in $(P(\mathfrak g)\cap \frac {1}{2} P(G)$, also denoted by $\sigma$, defined up to $P(G)$, such that its action on $\pi_1(G)$ is $\sigma$. 
\end{lem}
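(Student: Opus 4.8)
The plan is to reduce the statement to Pontryagin duality between the finite abelian group $\pi_1(G)$ and the finite quotient $P(\mathfrak g)/P(G)$, and then to extract the $2$-torsion. Throughout I identify $P(\mathfrak g)$ with the weight lattice $P(\widetilde G)$ of the simply connected cover, that is, the full character lattice of a maximal torus $\widetilde T\subset\widetilde G$, and $P(G)$ with the character lattice of the image torus $T=\widetilde T/\Gamma$, where $\Gamma:=\operatorname{Ker}(\widetilde G\to G)$. By Theorem \ref{thm:small} and the identification \eqref{center=pi_1}, $\Gamma$ is a finite subgroup of $Z(\widetilde G)\subset\widetilde T$, canonically isomorphic to $\pi_1(G)$.

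First I would set up the evaluation pairing. A weight $\lambda\in P(\mathfrak g)$ is a character of $\widetilde T$, so it may be evaluated on central elements, giving
\begin{align*}
\Gamma\times P(\mathfrak g)\longrightarrow\operatorname{U}(1),\qquad (z,\lambda)\mapsto\lambda(z).
\end{align*}
Since every root vanishes on the center (the center acts trivially on the adjoint representation), this pairing kills the root lattice $Q$, and in particular $Q\subseteq P(G)$. I would then establish the structural identity $P(G)=\{\lambda\in P(\mathfrak g):\lambda|_{\Gamma}=1\}$, i.e. a character of $\widetilde T$ descends to $T=\widetilde T/\Gamma$ precisely when it is trivial on $\Gamma$. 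Consequently the pairing descends to $\Gamma\times\big(P(\mathfrak g)/P(G)\big)\to\operatorname{U}(1)$, and I would check it is perfect: its left kernel is trivial because $P(\mathfrak g)$ separates the points of $\widetilde T\supseteq\Gamma$, and its right kernel is $P(G)/P(G)=0$ by the definition just established. Perfectness yields a canonical isomorphism $P(\mathfrak g)/P(G)\cong\operatorname{Hom}(\Gamma,\operatorname{U}(1))$.

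With this in hand the conclusion is formal. A homomorphism $\Gamma\to\mathbb Z_2$ is the same as one $\Gamma\to\operatorname{U}(1)$ whose square is trivial, so $\operatorname{Hom}(\Gamma,\mathbb Z_2)$ is the $2$-torsion subgroup of $\operatorname{Hom}(\Gamma,\operatorname{U}(1))\cong P(\mathfrak g)/P(G)$. Transporting the condition $2\lambda\in P(G)$ back to the lattice, this $2$-torsion subgroup is exactly $\big(P(\mathfrak g)\cap\tfrac12 P(G)\big)/P(G)$, where $\tfrac12 P(G):=\{\mu:2\mu\in P(G)\}$. Combining the two identifications gives the claimed canonical isomorphism, canonicity being automatic since neither the evaluation pairing nor the passage to $2$-torsion involves any choice. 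The final clause is then read off: given $\sigma$, a representative weight $\lambda\in P(\mathfrak g)\cap\tfrac12 P(G)$ satisfies $\lambda(z)=\sigma(z)$ for all $z\in\Gamma=\pi_1(G)$ by construction of the pairing, while $2\lambda\in P(G)$ forces $\sigma$ to be valued in $\{\pm1\}$; replacing $\lambda$ by another representative modulo $P(G)$ leaves $\lambda|_{\Gamma}$ unchanged.

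I expect the main obstacle to be the structural identity $P(G)=\Gamma^{\perp}$ together with the perfectness of the descended pairing; everything downstream (Pontryagin duality for finite abelian groups and the $2$-torsion bookkeeping) is routine. The cleanest route for the hard step is to dualize the short exact sequence of compact abelian groups $1\to\Gamma\to\widetilde T\to T\to 1$. Applying the exact contravariant character functor $\operatorname{Hom}(-,\operatorname{U}(1))$ gives
\begin{align*}
0\longrightarrow P(G)\longrightarrow P(\mathfrak g)\longrightarrow\operatorname{Hom}(\Gamma,\operatorname{U}(1))\longrightarrow 0,
\end{align*}
which simultaneously produces the equality $P(G)=\Gamma^{\perp}$, the surjection with kernel $P(G)$, and hence the perfect pairing in one stroke. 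It is precisely here that compactness and semisimplicity of $G$ enter, through finiteness of $\Gamma$ (so that characters of $\Gamma$ extend to $\widetilde T$) and of $P(\mathfrak g)/Q$.
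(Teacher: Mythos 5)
Your proof is correct, and it establishes the same underlying Pontryagin duality as the paper, but it runs that duality in the opposite direction. The paper identifies $\pi_1(G)\cong\operatorname{Ker}(Z(\widetilde G)\to Z(G))$ and then computes both centers in exponential coordinates on a maximal torus, as coweight-lattice quotients $Z(\widetilde G)\cong Q^*/P(\mathfrak g)^*$ and $Z(G)\cong Q^*/P(G)^*$; the kernel is then the dual group $(P(\mathfrak g)/P(G))^*$, and to finish, the paper must undo this dualization, using the short exact sequence $0\to\mathbb Z_2\to\mathbb Q/\mathbb Z\to\mathbb Q/\mathbb Z\to 0$ together with double Pontryagin duality to identify $\operatorname{Hom}((P(\mathfrak g)/P(G))^*,\mathbb Z_2)$ with the kernel of multiplication by $2$ on $P(\mathfrak g)/P(G)$. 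You instead stay on the weight side throughout: viewing $P(\mathfrak g)$ and $P(G)$ as the character lattices of $\widetilde T$ and of $T=\widetilde T/\Gamma$, and dualizing $1\to\Gamma\to\widetilde T\to T\to 1$, you get $P(\mathfrak g)/P(G)\cong\operatorname{Hom}(\Gamma,\text{U}(1))$ in one stroke, after which $\operatorname{Hom}(\Gamma,\mathbb Z_2)$ is literally the $2$-torsion subgroup of that dual, i.e.\ $(P(\mathfrak g)\cap \frac{1}{2}P(G))/P(G)$. Your route buys economy and manifest canonicity: one duality instead of two, no coweight lattices, no $\mathbb Q/\mathbb Z$ coefficients, and the evaluation pairing is visibly choice-free. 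The paper's route buys explicit coordinate descriptions of $Z(\widetilde G)$ and $Z(G)$ as lattice quotients, which are useful data in themselves. The one step you should not leave implicit is the identification of $P(G)$ with the full character lattice of $T$: the paper defines $P(G)$ representation-theoretically as the lattice of weights occurring in representations of $G$, so your setup needs the standard fact (via Peter--Weyl or Frobenius reciprocity for compact groups) that every character of a maximal torus of a compact connected Lie group occurs as a weight of some finite-dimensional representation — the exact analogue of the justification the paper itself supplies for $P(\widetilde G)=P(\mathfrak g)$ using simple connectedness.
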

\begin{proof}
In fact there are isomorphisms
\begin{align}
\text{Hom}(\pi_1(G),\mathbb Z_2)&\cong \text{Hom}(\text{Ker}(Z(\widetilde G)\to Z(G)),\mathbb Z_2)\nonumber\\
&\cong\text{Hom}((P(\mathfrak g)/P(G))^*,\mathbb Z_2)\nonumber\\
&\cong (P(\mathfrak g)\cap \frac {1}{2} P(G))/P(G)
\end{align}
Here the dual group $(P(\mathfrak g)/P(G))^*$ is defined as the Pontryagin dual $\text{Hom}(P(\mathfrak g)/P(G),\mathbb Q/\mathbb Z)$. The first isomorphism comes from \ref{center=pi_1}, the second isomorphism can be proved as following: It's well-known that maximal tori are conjugated with each other and they cover the whole group \cite{Knapp:LieGroup} , in particular every element in the center of $\widetilde G$ lies in the intersection of maximal tori (because it belongs to at least one maximal torus, then adjoint action take this particular maximal torus to other maximal tori). Now pick one maximal torus $T$, then $Z(\widetilde G)\subset T$, and $\forall x\in T$, $x$ can written as $x=e^{2\pi iX}$ for $X\in \mathfrak g$, note that $X$ is defined modulo $P(\widetilde G)^*$  since $\forall \lambda \in P(\widetilde G)$ and $\forall t\in P(\widetilde G)^*$, we have $e^{2\pi i\langle t,\lambda\rangle}=1$. $x\in Z(\widetilde G)$ if and only if the adjoint action $\text{Ad}(e^{2\pi iX})$ is trivial, or equivalently $\forall \alpha$ in the root system of $\mathfrak g$, $\langle \alpha,X\rangle\in \mathbb Z$, hence we can identify $Z(\widetilde G)$ with $Q^*/P(\widetilde G)^*$, where $Q$ denotes the root lattice and $Q^*$ is its dual. The same argument applies to $Z(G)$, and there is identification $Z(G)\cong Q^*/P(G)^*$. Now 
\begin{align}
\text{Ker}(Z(\widetilde G)\to Z(G))\cong \text{Ker}(Q^*/P(\widetilde G)^*\to Q^*/P(G)^*)\cong (P(\widetilde G)/P(G))^*
\end{align}
Obviously $P(\widetilde G)\subset P(\mathfrak g)$, and since $\widetilde G$ is simply conncted every Lie algebra representation $\mathfrak g\to \mathfrak {gl}_n$ gives rise to a Lie group representation $\widetilde G\to \text{GL}_n$, thus $P(\widetilde G)= P(\mathfrak g)$, and we arrives at $\text{Ker}(Z(\widetilde G)\to Z(G))\cong (P(\mathfrak g)/P(G))^*$, which implies the second isomorphism. For the third isomorphism, notice that there is a short exact sequence
\begin{center}
\begin{tikzcd}
0\arrow[r]&\mathbb Z_2\arrow[r] &\mathbb Q/\mathbb Z \arrow[r, "2"]& \mathbb Q/\mathbb Z \arrow[r]&0
\end{tikzcd}
\end{center}
which implies that $\text{Hom}((P(\mathfrak g)/P(G))^*,\mathbb Z_2)$ is the kernel of multiplication by 2 on the group $\text{Hom}((P(\mathfrak g)/P(G))^*,\mathbb Q/\mathbb Z)$, and the latter is identified with $P(\mathfrak g)/P(G)$ by Pontryagin duality. The kernel of multiplication by 2 is calculated by elementary group theory to be $(P(\mathfrak g)\cap \frac {1}{2} P(G))/P(G)$.
\end{proof}

Thus we have established the following generalization of Peter-Weyl theorem \ref{thm:PeterWeyl}:
\begin{thm}
\label{thm:Gen_PeterWeyl}
    The same notation as above, then square integrable twisted sections of the line bundle $\mathcal L_{\sigma}$ has a decomposition into finite dimensional irreducible representations of $\widetilde G$:
    \begin{align}\label{Gen_PeterWeyl}
        L^2(G,\mathcal L_{\sigma})=\bigoplus _{\lambda\in (\sigma +P(G))\cap P_+}V_{\lambda}^{\oplus \operatorname{dim}(V_{\lambda})}
    \end{align}
This isomorphism is given by taking the matrix coefficients of each irreducible representation, more precisely, let $\lambda$ be a weight, and $\pi ^{\lambda}:G\to \text{U}(V_{\lambda})$ be the associated unitary irreducible representation with highest weight $\lambda$, $\{e_i\}$ be an orthonormal basis of $V_{\lambda}$ with Hermitian metric $(-,-)$, then following twisted functions on $G$
    \begin{align}
        \pi ^{\lambda}_{ij}(g):=\sqrt{\operatorname{dim}(V_{\lambda})}(\pi ^{\lambda}(\widetilde g)e_i,e_j)
    \end{align}
constitute an orthonormal basis for the direct summand $V_{\lambda}^{\oplus \operatorname{dim}(V_{\lambda})}$ in the decomposition \ref{Gen_PeterWeyl}, where $\widetilde g$ is a lift of $g$ to $\widetilde G$.
\end{thm}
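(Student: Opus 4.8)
The plan is to reduce the statement to the ordinary Peter--Weyl theorem \ref{thm:PeterWeyl} applied to the compact group $\widetilde G$ (which is compact because $G$ is compact semisimple, so $\pi_1(G)$ is finite and $\widetilde G\to G$ is a finite cover), and then to carve out the correct summands using the equivariance condition already isolated in the discussion above. First I would invoke that discussion: a square-integrable section of $\mathcal L_{\sigma}$ is the same datum as a function $f\in L^2(\widetilde G)$ satisfying $f(g^{-1}h)=\sigma(g)f(h)$ for all $h\in\widetilde G$ and all $g\in\operatorname{Ker}(\widetilde G\to G)$, whose necessity and sufficiency were established before the statement. This identifies $L^2(G,\mathcal L_{\sigma})$ with the $\sigma$-isotypic subspace of $L^2(\widetilde G)$ under the left action of the finite central subgroup $\operatorname{Ker}(\widetilde G\to G)$, and the identification is an isometry for the natural $L^2$ norms coming from the Haar measure on $G$ and the flat unitary metric on $\mathcal L_{\sigma}$.

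Next I would apply Peter--Weyl to $\widetilde G$, giving $L^2(\widetilde G)=\bigoplus_{\lambda\in P(\widetilde G)\cap P_+}V_{\lambda}^{\oplus\dim V_{\lambda}}$, and track how left translation $L_g$ by $g\in\operatorname{Ker}(\widetilde G\to G)$ acts on the matrix-coefficient basis $\pi^{\lambda}_{ij}$. A short computation with $\pi^{\lambda}(g^{-1}h)=\pi^{\lambda}(g)^{-1}\pi^{\lambda}(h)$ and unitarity shows that $L_g$ acts on the entire summand labelled by $\lambda$ through the central character $\chi_{\lambda}(g)$ of $V_{\lambda}$ (up to complex conjugation, which is immaterial since $\sigma(g)=\pm1$ is real); this is a single scalar by Schur's Lemma because $\operatorname{Ker}(\widetilde G\to G)\subset Z(\widetilde G)$. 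Consequently the whole block $V_{\lambda}^{\oplus\dim V_{\lambda}}$ either survives or dies, and it satisfies the equivariance condition precisely when $\chi_{\lambda}(g)=\sigma(g)$ for every $g\in\operatorname{Ker}(\widetilde G\to G)$, i.e. when the central character of $V_{\lambda}$ restricted to the kernel coincides with $\sigma$.

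The heart of the argument, and the step I expect to be the main obstacle, is translating this character condition into the lattice condition $\lambda\in\sigma+P(G)$. Here I would feed in the identifications from the Lemma and its proof: writing a central element as $g=e^{2\pi iX}$, the central character reads $\chi_{\lambda}(g)=e^{2\pi i\langle\lambda,X\rangle}$, while the weight lift of $\sigma\in(P(\mathfrak g)\cap\tfrac12 P(G))/P(G)$ gives $\sigma(g)=e^{2\pi i\langle\sigma,X\rangle}$. Using $\operatorname{Ker}(\widetilde G\to G)\cong(P(\mathfrak g)/P(G))^*$, realized as $P(G)^*/P(\mathfrak g)^*$ (recall $P(\widetilde G)=P(\mathfrak g)$), the kernel elements are represented by $X\in P(G)^*$, so $\chi_{\lambda}=\sigma$ on all of the kernel is equivalent to $\langle\lambda-\sigma,X\rangle\in\mathbb Z$ for every $X\in P(G)^*$, which is exactly $\lambda-\sigma\in(P(G)^*)^*=P(G)$. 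The delicate points are the bookkeeping of which lattice is dual to which, so that the pairing condition collapses to $P(G)$ rather than to $P(\mathfrak g)$, together with checking that the half-integrality built into the lift of $\sigma$ is consistent with $\lambda\in P(\mathfrak g)=P(\widetilde G)$ so that the class $\sigma+P(G)$ indeed lies inside the weight lattice of $\widetilde G$.

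Finally I would assemble the pieces. Intersecting the surviving congruence class $\sigma+P(G)$ with the dominant chamber $P_+$ produces the index set $(\sigma+P(G))\cap P_+$, and since the kernel acts by a single scalar across each block, every surviving $\lambda$ contributes its full multiplicity $V_{\lambda}^{\oplus\dim V_{\lambda}}$. The corresponding matrix coefficients $\pi^{\lambda}_{ij}(g)=\sqrt{\dim V_{\lambda}}\,(\pi^{\lambda}(\widetilde g)e_i,e_j)$ are well-defined twisted functions precisely because they obey the equivariance relation (independence of the lift $\widetilde g$ being equivalent to $\chi_{\lambda}=\sigma$), and they form the claimed orthonormal basis, with orthonormality and completeness transported from Peter--Weyl on $\widetilde G$ through the isometry of the first step. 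This yields the decomposition \ref{Gen_PeterWeyl}.
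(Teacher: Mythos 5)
Your proposal is correct and takes essentially the same route as the paper: the paper's proof is precisely the preceding discussion that identifies twisted sections with functions on $\widetilde G$ obeying $f(g^{-1}h)=\sigma(g)f(h)$ for $g\in\operatorname{Ker}(\widetilde G\to G)$, applies Peter--Weyl to the compact cover $\widetilde G$, lets the central kernel act by scalars via Schur's Lemma, and uses the Lemma's identification to convert the scalar-matching condition into the lattice condition $\lambda\in\sigma+P(G)$. Your only addition is to spell out the duality bookkeeping $\bigl(P(G)^*\bigr)^*=P(G)$ and the central-character computation $\chi_\lambda(e^{2\pi iX})=e^{2\pi i\langle\lambda,X\rangle}$, which the paper leaves implicit.
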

Accordingly, the summation of \ref{Expansion3_No_Spin} should be replaced by dominant weights in the lattice $\sigma +P(G)$ and one arrives at 
\begin{align}\label{Expansion3}
Z_{\sigma}(\beta) =\sum_{\lambda\in (\sigma +P(G))\cap P_+}\text{dim}(V_{\lambda})^2e^{-\beta C_2(\lambda)/K}
\end{align}
A more explicit form of \ref{Expansion3} can be deduced from the Weyl dimension formula. Recall that the Weyl dimension formula 
\begin{align}\label{Weyl_Dim}
\text{dim}(V_{\lambda})=\prod_{\alpha \in R_+}\frac{\langle \alpha,\lambda+\rho\rangle }{\langle \alpha , \rho\rangle }
\end{align}
relates the dimension of an irreducible unitary representation $V_{\lambda}$ with the highest weight $\lambda$ and positive roots $\alpha\in R_+$. Plug it into \ref{Expansion3_No_Spin} and one arrives at
\begin{align}\label{Total}
{Z_{\sigma}(\beta)=\sum_{\lambda\in (\sigma +P(G))\cap P_+}\prod_{\alpha \in R_+}\frac{\langle \alpha,\lambda+\rho\rangle ^2}{\langle \alpha , \rho\rangle ^2}e^{-\beta \langle\lambda,\lambda+2\rho\rangle/K}}
\end{align}
This corresponds to the whole partition function with trivial chemical potential.

\subsection{The single charge sector}
One can also get the single charge sector contribution by applying the chemical potential trick to the partition function
\begin{align}
Z_{\sigma}(\beta|\phi):=\langle e^{-\beta H+i\phi Q}\rangle=\langle e^{i\phi}|e^{-2\pi\beta \widetilde H}|\mathbf{1}_G\rangle
\end{align}
where $\phi$ is an element in Lie algebra $\mathfrak{g}$. Note that this is related to charge sectors by
\begin{align}\label{defn_chargesector}
Z_{\sigma}(\beta|\phi):=\sum_{\mu\in P(G)}Z_{\sigma}(\beta,\mu)e^{i\langle \mu ,\phi\rangle}
\end{align}
Similiar to the last section, one expands the Hamiltonian $\widetilde H$ with respect to its eigenfunctions which have been fully classified by the Peter-Weyl theorem \ref{Gen_PeterWeyl}, and concludes that 
\begin{align}\label{Twist_1}
Z_{\sigma}(\beta|\phi)&=\sum _{i}\psi_i(e^{i\phi})\bar {\psi}_i(\mathbf{1}_G)e^{-2\pi\beta E_i}\nonumber\\
&=\sum_{\lambda \in (\sigma +P(G)) \cap P_+}\sum_{i,j=1}^{\text{dim}(V_{\lambda})}\pi^{\lambda}_{ij}(e^{i\phi})\bar{\pi}^{\lambda}_{ij}(\mathbf 1_G) e^{-\beta C_2(\lambda)/K}
\nonumber\\
&= \sum_{\lambda \in (\sigma +P(G)) \cap P_+}\sum_{i,j=1}^{\text{dim}(V_{\lambda})}\text{dim}(V_{\lambda})(\pi^{\lambda}(e^{i\phi})e_i,e_j)\overline{(\pi^{\lambda}(\mathbf 1_G)e_i,e_j)}e^{-\beta C_2(\lambda)/K}
\end{align}
$\pi^{\lambda}(\mathbf 1_G)$ is just the identity matrix on $V_{\lambda}$, which gives a $\delta_{ij}$ in the summation and turns $(\pi^{\lambda}(e^{i\phi})e_i,e_j)$ into a trace $\text{Tr}\left(\pi^{\lambda}(e^{i\phi})\right)$, and by definition this is the character $\chi _{\lambda}(\phi)$ of representation $V_{\lambda}$ thus the partition function with chemical potential reads
\begin{align}\label{Twist_2}
Z_{\sigma}(\beta|\phi)= \sum_{\lambda \in (\sigma +P(G)) \cap P_+}\text{dim}(V_{\lambda})\chi _{\lambda}(\phi)e^{-\beta C_2(\lambda)/K}
\end{align}

On the other hand, the basis of $V_{\lambda}$ can be chosen to be weight vectors such that the action of $e^{i\phi}$ is through $e^{i\langle \mu ,\phi\rangle}$ for each weight $\mu$ so the character $\chi _{\lambda}(\phi)$ can be represented by weight space decomposition 
\begin{align}\label{char}
\chi _{\lambda}(\phi)=\sum_{\mu\text{ is a weight in }V_{\lambda}}e^{i\langle \mu ,\phi\rangle}
\end{align}
Bring \ref{Twist_2} and \ref{char} together and plug them into the definition of charge sectors \ref{defn_chargesector}, one can write down the partition function for a single charge $\mu$
\begin{align}
Z_{\sigma}(\beta,\mu)=\sum_{\lambda\in (\sigma +P(G))\cap P_+}\text{dim}(V_{\lambda})\text{dim}_{\mu}(V_{\lambda})e^{-\beta C_2(\lambda)/K}
\end{align}
$\text{dim}_{\mu}(V_{\lambda})$ is the dimension of subspace of $V_{\lambda}$ with weight $\mu$. Weyl's dimension formula produces an explicit form of $\text{dim}(V_{\lambda})$, and recall the Kostant's dimension formula
\begin{align}
\text{dim}_{\mu}(V_{\lambda})=\sum_{w\in \mathcal W}(-1)^{|w|}\mathcal P(w(\lambda +\rho)-(\mu+\rho))
\end{align}
$\mathcal W$ is the Weyl group acting on weights, $|w|$ is the length of the elemnt $w$, i.e. the smallest number of $\alpha$'s such that $w$ can be generated as multiplication of reflections $w=s_{\alpha_1}\cdots s_{\alpha_{|w|}}$ \footnote{$s_{\alpha}$ is the reflection of the root plane with respect to the axis $\alpha$.}\cite{Humphreys:LieAlg}, $\mathcal P$ is the function that for each $\nu \in P(g)$, $\mathcal P(\nu)$ is the number of nonnegative integer solution $\{k_{\alpha}\}_{\alpha\in R_+}$ to the equation
\begin{align}
\nu=\sum_{\alpha\in R_+}k_{\alpha}\alpha
\end{align}
With the help of these fomulas, $Z_{\sigma}(\beta,\mu)$ can be more explicit
\begin{align}\label{ChargeSector_1}
Z_{\sigma}(\beta,\mu)=\sum_{\lambda\in (\sigma +P(G))\cap P_+}\prod_{\alpha \in R_+}\frac{\langle \alpha,\lambda+\rho\rangle }{\langle \alpha , \rho\rangle }\sum_{w\in \mathcal W}(-1)^{|w|}\mathcal P(w(\lambda +\rho)-(\mu+\rho))e^{-\beta C_2(\lambda)/K}
\end{align}
Since $\forall w\in \mathcal W$, $\text{dim}_{\mu}(V_{\lambda})=\text{dim}_{w(\mu)}(V_{\lambda})$, one can always conjugate $\mu $ to a dominant one without changing the partition function, so the assumption that $\mu$ is dominant can be made. Another inspection is that $w$ can written as reflections $w=s_{\alpha_1}\cdots s_{\alpha_{|w|}}$, $s_{\alpha_1}$ turns $\alpha_1$ into $-\alpha_1$ and permutes other $\alpha_i$'s so 
\begin{align}
\prod_{\alpha \in R_+}\langle \alpha,s_{\alpha_1}\cdots s_{\alpha_{|w|}}(\lambda +\rho)\rangle& =-\prod_{\alpha \in R_+}\langle \alpha,s_{\alpha_2}\cdots s_{\alpha_{|w|}}(\lambda +\rho)\rangle \nonumber\\
&=\cdots =(-1)^{|w|}\prod_{\alpha \in R_+}\langle \alpha,\lambda +\rho\rangle 
\end{align}
Plug this equation into \ref{ChargeSector_1} and simplifies it into
\begin{align}\label{ChargeSector_2}
Z_{\sigma}(\beta,\mu)&=\sum_{\lambda\in (\sigma +P(G))\cap P_+}\sum_{w\in \mathcal W}\prod_{\alpha \in R_+}\frac{\langle \alpha,w(\lambda +\rho)\rangle }{\langle \alpha , \rho\rangle }\mathcal P(w(\lambda +\rho)-(\mu+\rho))e^{-\beta C_2(\lambda)/K}\nonumber\\
\end{align}
Since $\mathcal P$ function counts nonnegative solutions $\{n_{\alpha}\}$ to the equation
\begin{align}
w(\lambda +\rho)=\mu+\rho+\sum_{\alpha\in R_+}n_{\alpha}\alpha=\mu+\rho+\vec n\cdot \vec {\alpha}
\end{align}
it makes no harm to replace $w(\lambda +\rho)$ by $\mu+\rho+\vec n\cdot \vec {\alpha}$. On the other hand one can also replace the summation over $\lambda$ and $w$ by summation over $\vec n\in \mathbb Z_{\ge 0}^p$, in which $p$ is the number of positive roots, i.e. the number of elements in $R_+$. We claim that this is possible, i.e 
\begin{align}\label{ChargeSector_3}
Z_{\sigma}(\beta,\mu)=\sum_{\vec{n}\in \mathbb Z_{\ge 0}^p}\prod_{\alpha \in R_+}\frac{\langle \alpha,\mu+\rho+\vec n\cdot \vec {\alpha}\rangle }{\langle \alpha , \rho\rangle }e^{-\beta (|\mu+\rho+\vec n\cdot \vec {\alpha}|^2-|\rho|^2)/K}
\end{align}
To prove this equation, it suffices to show the equivalence between two index sets, i.e. $\forall \vec n\in \mathbb Z_{\ge 0}^p$, there exists a unique combination $\lambda$ and $w$ such that $w(\lambda +\rho)=\mu+\rho+\vec n\cdot \vec {\alpha}$. This statement is equivalent to that $\mu+\rho+\vec n\cdot \vec {\alpha}$ lies in the interior of some Weyl chamber: it's necessary because $\lambda +\rho$ is inside the interior of the dominant Weyl chamber and Weyl group permutes interior of the Weyl chambers. It's also sufficient because there is a unique Weyl group element $w_0$ sending it to the interior of dominant Weyl chamber, i.e. its coordinates under the Dynkin basis are all positive intergers, so $\lambda_0:=w_0(\mu+\rho+\vec n\cdot \vec {\alpha})-\rho\in P(\mathfrak g)\cap P_+$. The only thing remains to verify is that $\lambda_0 \in \sigma +\frac{1}{2}P(G)$, this is done by observing that 
\begin{align}
&w_0(\vec n\cdot \vec {\alpha})\in Q\subset P(G)\nonumber\\
\forall \beta\in& P(\mathfrak g), w_0(\beta)-\beta\in Q\subset P(G)
\end{align}
so $\lambda_0\in \mu+P(G)$, but $\mu\in \sigma +\frac{1}{2}P(G)$, whence $\lambda_0 \in \sigma +\frac{1}{2}P(G)$. However it's totally possible that $\mu+\rho+\vec n\cdot \vec {\alpha}$ lies in the boundary of some Weyl chamber, take $\text{SU}(3)$ for example, $\alpha_1$ and $\alpha_2$ are two simple roots of it, then $\rho=\alpha_1+\alpha_2$ and $\rho+\alpha_2$ is in the boundary of dominant Weyl chamber, because $\langle \rho+\alpha_2,\alpha_1\rangle=0$. Nevertheless, this does not affect the summation because some $\alpha\in R_+$ kills $\mu+\rho+\vec n\cdot \vec {\alpha}$ and the product term in the summation is automatically zero, i.e. the only survivals are those $\mu+\rho+\vec n\cdot \vec {\alpha}$  lying in the interior of some Weyl chamber, and the equivalence between two index sets is established.
\\
\\
One finally arrives at
\begin{align}\label{ChargeSector_Final}
{Z_\sigma }(\beta ,\mu ) = \sum\limits_{\vec n \in \mathbb{Z}_{ \ge 0}^p} F (\mu  + \rho  + \vec n \cdot \vec \alpha ){\rm{ ,  }}F(\nu ) = {e^{ - \beta (|\nu {|^2} - |\rho {|^2})/K}}\prod\limits_{\alpha  \in {R_ + }} {\frac{{\langle \alpha ,\nu \rangle }}{{\langle \alpha ,\rho \rangle }}} 
\end{align}
In the current stage, we may observe that when $\beta \gg K$, the formula simply gives a constant, ${Z_\sigma }(\beta ,\mu )\sim \mathcal{O}(1)$. In this limit, there is no contribution from the symmetry sector purely from the partition function. However, when $\beta \ll K$, we would still expect some interesting dependence over temperature.

\subsection{Partition function with fixed chemical potential}
We have already derived the total partition function \ref{Total} for the fixed chemical potential $\phi=0$. For generic chemical potential, an easy way is to use the resummation formula given in the appendix of \cite{Picken:1988ev}, based on the single charge sector result. In this section we will show that the partition function with given chemical potential is given by
\begin{align}\label{Twist_3}
Z_{\sigma}(\beta|\phi)=c\left(\frac{K}{4\pi \beta}\right)^{n/2}e^{\frac{n\beta}{24K}}\Theta_{\sigma}(\beta,\phi)
\end{align}
$c$ is a constant depending on $G$, which is computed by
\begin{align}
c=(2\pi)^{p+r}(\text{det }\mathcal C)^{1/2}\prod_{\alpha \in R_+}\langle \alpha,\rho\rangle^{-1}
\end{align}
where $r$ is the rank of $G$, i.e. dimension of Cartan subalgbra, $\mathcal C$ is the Cartan matrix. $\Theta_{\sigma}(\beta,\phi)$ is the theta function defined by
\begin{align}\label{Theta}
\Theta_{\sigma}(\beta,\phi)\equiv \sum_{\mu \in \Lambda(G)}\prod_{\alpha \in R_+}\frac{\langle \alpha, \phi+2\pi \mu\rangle }{2\sin (\langle \alpha, \phi+2\pi \mu\rangle/2 ) }e^{2\pi i\langle \sigma+\rho,  \mu\rangle}e^{-\frac{K}{4\beta}\langle \phi+2\pi \mu,\phi+2\pi \mu\rangle}
\end{align}
In particular, if $\phi$ is taken to be zero, then this gives the partition function
\begin{align}\label{Total_2}
Z_{\sigma}(\beta)&=c\left(\frac{K}{4\pi \beta}\right)^{n/2}e^{\frac{n\beta}{24K}}\Theta_{\sigma}(\beta,0)
\end{align}
First of all, we apply the Weyl character formula and Weyl dimension formula to expand \ref{Twist_2} concretely:
\begin{align}
\begin{split}
Z_{\sigma}(\beta|\phi)&= \sum_{\lambda \in (\sigma +P(G)) \cap P_+}e^{-\beta C_2(\lambda)/K}\left[\prod_{\alpha \in R_+}\frac{\langle \alpha,\lambda+\rho\rangle }{\langle \alpha , \rho\rangle }\right]\\
&\times\left[\prod_{\alpha \in R_+}\frac{1}{2i\sin (\langle \alpha, \phi\rangle/2 )}\sum_{w\in \mathcal W}(-1)^{|w|}\exp (i\langle w(\lambda+\rho),\phi\rangle )\right]
\end{split}
\end{align}
Recall the definition of constant $c=(2\pi)^{p+r}(\text{det }\mathcal C)^{1/2}\prod_{\alpha \in R_+}\langle \alpha,\rho\rangle^{-1}$ and plug it into the formula:
\begin{align}
\begin{split}
Z_{\sigma}(\beta|\phi)&=\frac{c}{(2\pi)^{p+r}(\text{det }\mathcal C)^{1/2}}\sum_{\lambda \in (\sigma +P(G)) \cap P_+}e^{-\beta C_2(\lambda)/K}\\
&\times\left[\prod_{\alpha \in R_+}\frac{\langle \alpha,\lambda+\rho\rangle }{2i\sin (\langle \alpha, \phi\rangle/2 )}\sum_{w\in \mathcal W}(-1)^{|w|}\exp (i\langle w(\lambda+\rho),\phi\rangle )\right]
\end{split}
\end{align}
Since the second Chern number of representation $\lambda$ is $C_2(\lambda)=\langle \lambda+\rho,\lambda+\rho\rangle -\langle \rho,\rho\rangle$, and according to \emph{strange formula} of Freudenthal and de Vries, i.e. $\langle \rho,\rho\rangle=n/24$, where $n=\text{dim}(G)$, we can rewrite the formula as
\begin{align}
\begin{split}
Z_{\sigma}(\beta|\phi)=&\frac{c\cdot \exp (n\beta/24K)}{(2\pi)^{p+r}(\text{det }\mathcal C)^{1/2}}\sum_{\lambda \in (\sigma +P(G)) \cap P_+}\sum_{w\in \mathcal W}\\
&\left[\prod_{\alpha \in R_+}\frac{\langle \alpha,w(\lambda+\rho)\rangle }{2i\sin (\langle \alpha, \phi\rangle/2 )}\right]e^{i\langle w(\lambda+\rho),\phi\rangle-\beta \langle w(\lambda+\rho),w(\lambda+\rho)\rangle /K}
\end{split}
\end{align}
It's obvious that when $\lambda$ runs though all lattice points in
$(\sigma +P(G)) \cap P_+$ and $w$ runs through all group elements in $\mathcal W$, $w(\lambda+\rho)$ runs through all lattice points in $\sigma+\rho +P(G)$ with multiplicity one except possibly for those lying on the boundary of Weyl chambers, but those weights annihilate at least one $\alpha\in R_+$, hence the formula doesn't change if we simply add them to the summation by hand, i.e. we have
\begin{align}
Z_{\sigma}(\beta|\phi)=\frac{c\cdot \exp (n\beta/24K)}{(2\pi)^{p+r}(\text{det }\mathcal C)^{1/2}}\sum_{\lambda \in \sigma+\rho +P(G)}\left[\prod_{\alpha \in R_+}\frac{\langle \alpha,\lambda\rangle }{2i\sin (\langle \alpha, \phi\rangle/2 )}\right]e^{i\langle \lambda,\phi\rangle-\beta \langle \lambda,\lambda\rangle /K}
\end{align}
Using Poisson resummation formula, we can rewrite it as the summation over the dual lattice of $P(G)$, which is be denoted by $\Lambda (G)$:
\begin{align}
Z_{\sigma}(\beta|\phi)=c\left(\frac{K}{4\pi \beta}\right)^{n/2}e^{\frac{n\beta}{24K}}\sum_{\mu \in \Lambda(G)}\prod_{\alpha \in R_+}\frac{\langle \alpha, \phi+2\pi \mu\rangle }{2\sin (\langle \alpha, \phi+2\pi \mu\rangle/2 ) }e^{2\pi i\langle \sigma+\rho,  \mu\rangle}e^{-\frac{K}{4\beta}\langle \phi+2\pi \mu,\phi+2\pi \mu\rangle}
\end{align}
which is exactly \ref{Twist_3}.

We also comment here on a harder way to derive this formula. If we wish to directly use the $\phi=0$ result, we should write \ref{Total} in the form of 
\begin{align}
{Z_\sigma }(\beta ) = c{\left( {\frac{K}{{4\pi \beta }}} \right)^{n/2}}{e^{\frac{{n\beta }}{{24K}}}}\sum\limits_{\mu  \in \Lambda (G)} {\prod\limits_{\alpha  \in {R_ + }} {\frac{{\langle \alpha ,2\pi \mu \rangle }}{{2\sin (\langle \alpha ,2\pi \mu \rangle /2)}}} } {e^{2\pi i\langle \sigma  + \rho ,\mu \rangle }}{e^{ - \frac{K}{{4\beta }}\langle 2\pi \mu ,2\pi \mu \rangle }}
\end{align}
where in this form, pole cancellation happens and thus it will turn a summation formula in a single term. Then we could apply the argument similar in $\text{U}(1)$ to shift $2\pi \mu$ by $\phi+2\pi \mu$, which will give \ref{Twist_3}.

Similarly, we also observe that with fixed chemical potential, the partition function has no contribution when $\beta \gg K$. Thus, in the following examples, we will give analysis in detail in the limit where $\beta \ll K$.
\section{Examples and properties}
\subsection{Example: $\text{SU}(M+1)$}
\textbf{Single charge sector: }In this section we are going to evaluate 
\begin{align}
{Z_\sigma }(\beta ,\mu ) = \sum\limits_{\vec n \in \mathbb{Z}_{ \ge 0}^p} F (\mu  + \rho  + \vec n \cdot \vec \alpha )~~~~~~F(\nu ) = {e^{ - \beta (|\nu {|^2} - |\rho {|^2})/K}}\prod\limits_{\alpha  \in {R_ + }} {\frac{{\langle \alpha ,\nu \rangle }}{{\langle \alpha ,\rho \rangle }}} 
\end{align}
for $\text{SU}(M+1)$. Here all the inner products are defined over the $\omega_i$ basis, where 
\begin{align}
{\omega _i} = {e_i} - {e_{i + 1}}
\end{align}
for $i=1,2,\cdots,M$. Here we have 
\begin{align}
{R_ + } = \left\{ {{\alpha _i} = {e_i} - {e_j},i < j} \right\}
\end{align}
So $\abs{R_+}=\frac{1}{2}M(M+1)=p$. In this basis, we have
\begin{align}
\rho=\rho_i \omega_i~~~~~~\rho_i=i\times (M-i+1)/2
\end{align}
And we define the inner product
\begin{align}
\left<a,b\right>=\mathcal{C}_{ij}a_ib_j~~~~~~a=a_i\omega_i~~~~~~b=b_i\omega_i
\end{align}
and the Cartan matrix
\begin{align}
\mathcal{C}_{ij} = \left\{ \begin{array}{l}
2~~~~~~i = j\\
 - 1~~~\left| {i - j} \right| = 1\\
0~~~~~~\text{others}
\end{array} \right.
\end{align}
Thus 
\begin{align}
{\left| \rho  \right|^2} = \left\langle {\rho ,\rho } \right\rangle  = \frac{1}{12}M(M + 1)(M + 2)
\end{align}
The charge sector $\mu$ is taken from $P(G)$. For $\text{SU}(M+1)$, $P(G)$ is expanded by $\kappa$ basis, we have
\begin{align}
P(G) = \left\{ {{\kappa _j}:\left\langle {{\omega _i},{\kappa _j}} \right\rangle  = {\delta _{ij}}} \right\}
\end{align}
and $\Lambda(G)$ is expanded by $\omega$ basis
\begin{align}
\Lambda (G) = \left\{ {{\omega _j}} \right\}
\end{align}
we could write down $\kappa$ in teams of $\omega$
\begin{align}
&{\kappa _i} = {X_{ij}}{\omega _j}\nonumber\\
&{X_{ij}} = \left\{ {\begin{array}{*{20}{c}}
{\frac{{i(M + 1 - j)}}{{M + 1}}}&{i < j}\\
{\frac{{j(M + 1 - i)}}{{M + 1}}}&{i \ge j}
\end{array}} \right.
\end{align}
Let us firstly consider $M=1$. In this case the charge sector is taken as $\mu\in \mathbb{Z}/2$. In the proof we make the dominate assumption, thus we take the non-negative $\mu$. So the partition function is 
\begin{align}
{Z_\sigma }(\beta ,\mu ) = \sum\limits_{n = 0}^{ + \infty } {(1 + \mu  + n)\exp \left( { - \frac{{2\beta }}{K}\left( {{{(n + \mu  + 1)}^2} - 1} \right)} \right)} 
\end{align}
Considering that $K\gg\beta $, one can estimate the result by the following integral
\begin{align}
{Z_\sigma }(\beta ,\mu ) \sim \int_{1 + \mu }^{ + \infty } {\exp \left( { - \frac{{2\beta }}{K}\left( {{x^2} - 1} \right)} \right)xdx}  = \frac{K}{{4\beta }}{e^{ - \frac{{2\beta \mu (\mu  + 2)}}{K}}}
\end{align}
We could make the following estimations here
\begin{itemize}
\item If $\mu  \ll {\left( {\frac{K}{\beta }} \right)^{1/2}}$ then the partition function is simply scales as 
\begin{align}
{Z_\sigma }(\beta ,\mu ) \sim  \frac{K}{{\beta }}
\end{align}
Here it means that $\mu$ is sufficiently closed to $\rho$. In this case, for small $n$ each exponential term in the sum is sufficiently closed to 1, then the sum is effectively
\begin{align}
{Z_\sigma }(\beta ,\mu ) \sim \sum\limits_{n = 0}^{{n^2} \approx \frac{K}{\beta }} n  \sim \frac{K}{\beta }
\end{align}
while for $n^2>\frac{K}{\beta}$ the terms are close to zero so we truncate the sum. This explain the result of the direct integral.
\item If $\mu \sim {\left( {\frac{K}{\beta }} \right)^{1/2}}$ and even larger, the partition function will exponentially decay towards zero as $\beta$ increases. The exponential decay could be explained by the following. Since $\mu$ is sufficiently large, the terms in the sum decay very fast, so the result is dominated by the first term in the sum
\begin{align}
{Z_\sigma }(\beta ,\mu ) \sim \exp \left( { - \frac{{2\beta }}{K}{\mu ^2}} \right)
\end{align}
and the power law decaying factor is no longer important. 
\end{itemize}
Motivated by the discussions in $\text{SU}(2)$, we could make a generic estimation on the $\text{SU}(M+1)$ result. Firstly take a look on the structure of $F(\nu)$, the set $R_+$ contains $\frac{M(M+1)}{2}$ terms, and each term in 
\begin{align}
\frac{{\left\langle {\alpha ,\nu } \right\rangle }}{{\left\langle {\alpha ,\rho } \right\rangle }}
\end{align}
is a linear sum of $\nu_i$. So we could write the partition function as
\begin{align}
{Z_\sigma }(\beta ,\mu )\sim\exp \left( {\frac{\beta }{K}{{\left| \rho  \right|}^2}} \right)\sum\limits_{\{ q\} } {{c_{\{ q\} }}} \prod\nolimits_{i = 1}^{M(M +1)/2} {\left( {\int_{{\nu _i} > \rho  + \mu } {d{\nu _i}\nu _i^{{q_i}}\exp \left( { - \frac{\beta }{K}\nu _i^2} \right)} } \right)} 
\end{align}
where $c_{\{q\}}$ are coefficients for set $\{q_i\}$ satisfying
\begin{align}
q_i\in \mathbb{Z}_{\ge 0}~~~~~~\sum\limits_{i = 1}^{M(M+1)/2} {{q_i}}  \le \frac{{M(M + 1)}}{2}
\end{align}
Thus here it is convenient to define the following function
\begin{align}
{g_q}(t,z) = \int_z^{ + \infty } {{x^q}\exp \left( { - \frac{{{x^2}}}{t}} \right)dx}  = \frac{1}{2}{z^{1 + q}}{E_{(1 - q)/2}}(\frac{{{z^2}}}{t})
\end{align}
where $E_n(z)$ is the standard exponential integral function
\begin{align}
{E_n}(z) = \int_1^{ + \infty } {\frac{{{e^{ - zt}}}}{{{t^n}}}dt} 
\end{align}
So we have
\begin{align}
{Z_\sigma }(\beta ,\mu )\sim \exp \left( {\frac{\beta }{K}\frac{{M(M + 1)(M + 2)}}{12}} \right)\sum\limits_{\{ q\} } {{c_{\{ q\} }}} \prod\nolimits_i {{g_{{q_i}}}(\frac{K}{\beta },{\rho _i} + {\mu _i})} 
\end{align}
Here we could make the following treatment, and there are three following numbers that are possibly large: $\frac{K}{\beta}$, $M$ and $\mu$.
\begin{itemize}
\item Large $\frac{K}{\beta}$, relatively small $M$ and $\mu$. Here we use the expansion
\begin{align}
{g_q}(z,t) \sim \frac{1}{2}{t^{\frac{{q + 1}}{2}}}\Gamma \left( {\frac{{q + 1}}{2}} \right)
\end{align}
Thus larger $q_i$ means a dominated decaying rate. So we take
\begin{align}
\sum\limits_{i = 1}^{M(M+1)/2} {{q_i}}= \frac{{M(M + 1)}}{2}
\end{align}
where for thus terms we have
\begin{align}
\sum\limits_{i = 1}^{M(M+1)/2}  {\frac{{{q_i} + 1}}{2}}  = \frac{{M(M + 1)}}{2}
\end{align}
So we get
\begin{align}
{Z_\sigma }(\beta ,\mu ) \sim {\left( {\frac{K}{\beta }} \right)^{\frac{{(M + 1)M}}{2}}}
\end{align}
\item Large $\frac{K}{\beta}$, $M$ is kept to be small, but some $\mu_i$ is sufficiently large, $\mu_i^2\sim \frac{K}{\beta}$, then that will cause a fast exponential decay. We have
\begin{align}
{Z_\sigma }(\beta ,\mu ) \sim {e^{ - \#  \times \frac{{\beta \mu _i^2}}{K}}}
\end{align}
\item For $\text{poly}(M)\sim \frac{K}{\beta} \gg 1$, the decaying rate will increase dramatically. For small $\mu$ case, the Gamma function will provide exponential decay about polynomials of $M$. Moreover, the exponential decaying part will be $M$-fold. For large $\mu$ case, we also get a $M$-fold exponential decaying.
\end{itemize}
\textbf{Whole sector: }Now we take a look at the whole sector. The result is given by
\begin{align}
{Z_\sigma }(\beta |\phi ) = c{\left( {\frac{K}{{4\pi \beta }}} \right)^{M(M + 2)/2}}{e^{\frac{{M(M + 2)\beta }}{{24K}}}}{\Theta _\sigma }(\beta ,\phi )
\end{align}
where
\begin{align}
c = {(2\pi )^{p + r}}{(\text{det} {\cal C})^{1/2}}\prod\limits_{\alpha  \in {R_ + }}^{} {\frac{1}{{\left\langle {\alpha ,\rho } \right\rangle }}} 
\end{align}
and
\begin{align}
{\Theta _\sigma }(\beta ,\phi ) = \sum\limits_{\mu  \in \Lambda (G)} {{e^{2\pi i\langle \sigma  + \rho ,\mu \rangle }}{e^{ - \frac{K}{{4\beta }}\langle \phi  + 2\pi \mu ,\phi  + 2\pi \mu \rangle }}\prod\limits_{\alpha  \in {R_ + }} {\frac{{\langle \alpha ,\phi  + 2\pi \mu \rangle }}{{2\sin (\langle \alpha ,\phi  + 2\pi \mu \rangle /2)}}} } 
\end{align}
Thus in $\text{SU}(M+1)$, we have 
\begin{align}
&\text{det}\mathcal{C}=M+1\nonumber\\
&p=\frac{1}{2}M(M+1)~~~~~~r=M\nonumber\\
&c = {(2\pi )^{M(M + 3)/2}}{(M + 1)^{1/2}}\frac{1}{{{2^{{M^2}}}}}
\end{align}
In $\text{SU}(M+1)$, since it is simply connected, thus $\sigma$ is taking on arbitrary element from $P(G)$. Thus, the result is not actually related to the spin structure
\begin{align}
{\Theta _\sigma }(\beta ,\phi ) = \sum\limits_{\mu  \in \Lambda (G)} {{e^{2\pi i\langle \rho ,\mu \rangle }}{e^{ - \frac{K}{{4\beta }}\langle \phi  + 2\pi \mu ,\phi  + 2\pi \mu \rangle }}\prod\limits_{\alpha  \in {R_ + }} {\frac{{\langle \alpha ,\phi  + 2\pi \mu \rangle }}{{2\sin (\langle \alpha ,\phi  + 2\pi \mu \rangle /2)}}} } 
\end{align}
Moreover, since $\langle \rho ,\mu \rangle $ is always integer, the phase term should also be removed and the result is
\begin{align}
{\Theta _\sigma }(\beta ,\phi ) = \sum\limits_{\mu  \in \Lambda (G)} {{e^{ - \frac{K}{{4\beta }}\langle \phi  + 2\pi \mu ,\phi  + 2\pi \mu \rangle }}\prod\limits_{\alpha  \in {R_ + }} {\frac{{\langle \alpha ,\phi  + 2\pi \mu \rangle }}{{2\sin (\langle \alpha ,\phi  + 2\pi \mu \rangle /2)}}} } 
\end{align}
For $\text{SU}(2)$ we have the fact that $\phi$ and $\sigma$ are numbers.
\begin{align}
\prod\limits_{\alpha  \in {R_ + }} {\frac{{\langle \alpha ,\phi  + 2\pi \mu \rangle }}{{2\sin (\langle \alpha ,\phi  + 2\pi \mu \rangle /2)}}}  = \frac{{\phi  + 2\pi \mu }}{{\sin \phi }}
\end{align}
where $\mu$ is integer. So we get
\begin{align}
{\Theta _\sigma }(\beta ,\phi ) = \sum\limits_{\mu  \in \mathbb{Z}} {{e^{ - \frac{K}{{2\beta }}{{(\phi  + 2\pi \mu )}^2}}}(\phi  + 2\pi \mu )\frac{1}{{\sin (\phi )}}} 
\end{align}
Now we notice that, if $\phi=2\pi\mathbb{Z}$, let's say $\phi=2\pi n_\phi$ where $n_\phi$ is an integer. Then we say that most terms in the sum cancel except $\mu=-n_\phi$, which gives
\begin{align}
{\Theta _\sigma }(\beta ,\phi ) = 1
\end{align}
So the result is simply
\begin{align}
{Z_\sigma }(\beta |\phi ) \sim {\left( {\frac{K}{\beta }} \right)^{3/2}}{e^{\frac{\beta }{{12K}}}} \sim {\left( {\frac{K}{\beta }} \right)^{3/2}}
\end{align}
Another case is that $\phi$ is not in $2\pi\mathbb{Z}$, then write $\text{Round}(x)$ the integer closest to $x$, then the sum is dominated by
\begin{align}
&{\Theta _\sigma }(\beta ,\phi ) \sim {e^{ - \frac{K}{{2\beta }}{{(\phi  - 2\pi {\rm{Round}}\left( {\frac{\phi }{{2\pi }}} \right))}^2}}}(\phi  - 2\pi {\rm{Round}}\left( {\frac{\phi }{{2\pi }}} \right))\frac{1}{{\sin (\phi )}}\nonumber\\
&\sim {e^{ - \frac{K}{{2\beta }}{{(\phi  - 2\pi {\rm{Round}}\left( {\frac{\phi }{{2\pi }}} \right))}^2}}}(\phi  - 2\pi {\rm{Round}}\left( {\frac{\phi }{{2\pi }}} \right))\frac{1}{{\sin (\phi )}}
\end{align}
Thus, in general, the result will get an exponential decay
\begin{align}
{Z_\sigma }(\beta |\phi )\sim{\left( {\frac{K}{\beta }} \right)^{3/2}}{e^{ - \frac{K}{{2\beta }}{{(\phi  - 2\pi {\rm{Round}}\left( {\frac{\phi }{{2\pi }}} \right))}^2}}}
\end{align}
Now we consider generic $\text{SU}(M+1)$ case. The result is similar. We have
\begin{itemize}
\item For given $\phi$, If there exists $\mu \in \Lambda(G)$ and $\alpha \in R_+$ such that 
\begin{align}
{\langle \alpha ,\phi  + 2\pi \mu \rangle }=0
\end{align}
then there might be multiple solutions of $\alpha$ and $\mu$ for that given $\phi$. Find all of them, and we get a set of allowed $\mu$. Then the partition function scales as
\begin{align}
{Z_\sigma }(\beta |\phi )\sim{\left( {\frac{K}{\beta }} \right)^{M(M + 2)/2}}{e^{\frac{{M(M + 2)\beta }}{{24K}}}}{e^{ - \frac{K}{{4\beta }}{\rm{Mi}}{{\rm{n}}_{{\rm{allowed }}\mu }}\left( {\langle \phi  + 2\pi \mu ,\phi  + 2\pi \mu \rangle } \right)}}
\end{align}
It is possible that we could have
\begin{align}
{\rm{Min}}{{\rm{}}_{{\text{allowed }}\mu }}\left( {\langle \phi  + 2\pi \mu ,\phi  + 2\pi \mu \rangle } \right) = 0
\end{align}
where in this case we get
\begin{align}
{Z_\sigma }(\beta |\phi )\sim{\left( {\frac{K}{\beta }} \right)^{M(M + 2)/2}}{e^{\frac{{M(M + 2)\beta }}{{24K}}}}
\end{align}
\item If it does not exist such $\mu$, we have
\begin{align}
{Z_\sigma }(\beta |\phi ) \sim {\left( {\frac{K}{\beta }} \right)^{M(M + 2)/2}}{e^{\frac{{M(M + 2)\beta }}{{24K}}}}{e^{ - \frac{K}{{4\beta }}{\text{Min}_\mu }\left( {{{\langle \phi  + 2\pi \mu ,\phi  + 2\pi \mu \rangle }}} \right)}}
\end{align}
\end{itemize}
Thus, there is an interesting bound we could find for every possible $\phi$. Since we know that 
\begin{align}
 - \mathcal{O}(1) \times M \lessapprox  - {\rm{Mi}}{{\rm{n}}_\mu }\left( {\langle \phi  + 2\pi \mu ,\phi  + 2\pi \mu \rangle } \right) \lessapprox 0
\end{align}
Where $\mathcal{O}(1)$ means a numerical constant. Thus we know that
\begin{align}
{\left( {\frac{K}{\beta }} \right)^{M(M + 2)/2}}{e^{\frac{{M(M + 2)\beta }}{{24K}}}}{e^{ - \frac{{KM}}{{4\beta }} \times \mathcal{O}(1)}} \lessapprox {Z_\sigma }(\beta |\phi ) \lessapprox {\left( {\frac{K}{\beta }} \right)^{M(M + 2)/2}}{e^{\frac{{M(M + 2)\beta }}{{24K}}}}
\end{align}
The bound in the RHS could appear in any $M$. In fact, in $\text{SU}(M+1)$, we consider $\phi$ to be zero, one can show that there is one single $\mu=0$ to make the function $ {\langle \phi  + 2\pi \mu ,\phi  + 2\pi \mu \rangle } $ to get minimized at zero according to the assumption that existing $\alpha$ and $\mu$ such that ${\langle \alpha ,\phi  + 2\pi \mu \rangle }=0$. This term gives the contribution to the $\Theta$ function $\mathcal{O}(1)$, thus we get
\begin{align}
{Z_\sigma }(\beta |\phi )\sim{\left( {\frac{K}{\beta }} \right)^{M(M + 2)/2}}{e^{\frac{{M(M + 2)\beta }}{{24K}}}}
\end{align}

\subsection{$\text{SO}(2M+1)$}
\textbf{{\text{SO}(3)}: } The most simplest case, $\text{SO}(3)$, is very similar with $\text{SU}(2)$, where we have computed before, and thus it is slightly different from the general $\text{SO}(2M+1)$ case with $M\ge 2$. So we discuss it separately. 

Firstly, the similarities are that we have the same $\omega$ basis
\begin{align}
{\omega _1} = {e_1} - {e_2}
\end{align}
and the same matrix $\mathcal{C}$
\begin{align}
\mathcal{C}_{11}=2
\end{align}
and we have $\Lambda(G)=\omega_1\mathbb{Z}$, $\phi\in \mathbb{R}$. We also have the same $R_+$, $R_+=\{\omega_1\}$, $\abs{R_+}=1=p$, and $\rho=\omega_1/2$, $r=1$, so the constant $c=2\sqrt{2}\pi^2$. And $P(G)$ should be the dual lattice of $\Lambda(G)$, namely
\begin{align}
P(G) = \frac{1}{2}\mathbb{Z}{\omega _1}
\end{align}
and $\mu \in P(G)+\sigma_i$ where we assume $\mu$ is dominate (non-negative), and depending on the spin structure $\sigma_i$. There are two spin structures:
\begin{align}
\sigma_0=0~~~~~~\sigma_1=\frac{1}{4}\omega_1
\end{align}
The single charge sector result is exactly the same as $\text{SU}(2)$
\begin{align}
{Z_\sigma }(\beta ,\mu ) \sim \frac{K}{{4\beta }}{e^{ - \frac{{2\beta \mu (\mu  + 2)}}{K}}}
\end{align}
although the choice of $\mu$ is different. Moreover, since
\begin{align}
&{\Theta _{{\sigma _0}}}(\beta ,\phi ) = \sum\limits_{\mu  \in \mathbb{Z}} {{e^{ - \frac{K}{{2\beta }}{{(\phi  + 2\pi \mu )}^2}}}(\phi  + 2\pi \mu )\frac{1}{{\sin (\phi )}}} \nonumber\\
&{\Theta _{{\sigma _1}}}(\beta ,\phi ) = \sum\limits_{\mu  \in \mathbb{Z}} {{{( - 1)}^\mu }{e^{ - \frac{K}{{2\beta }}{{(\phi  + 2\pi \mu )}^2}}}(\phi  + 2\pi \mu )\frac{1}{{\sin (\phi )}}} 
\end{align}
Thus we see that the non-trivial spin structure only brings the phase factor, and thus in the limit we are interested in, those two spin structures are both approximately give the same expressions as the whole sector formula of $\text{SU}(2)$. 
\\
\\
\textbf{Single charge sector: }We will list the necessary data for $\text{SO}(2M+1)$ here.
\begin{align}
&{\omega _i} = \left\{ {\begin{array}{*{20}{l}}
{{e_i} - {e_{i + 1}}}&{i = 1, \cdots ,M - 1}\\
{{e_M}}&{i = M}
\end{array}} \right.\nonumber\\
&{R_ + } = \left\{ {{\alpha _i} = {e_i} \pm {e_j},i < j:{\text{for  }}i,j = 1,2, \cdots ,M} \right\} \cup \left\{ {{e_i}:{\text{for  }}i = 1,2, \cdots ,M} \right\}\nonumber\\
&\left| {{R_ + }} \right| = {M^2} = p\nonumber\\
&\rho  =(2(M-i)+1)e_i\nonumber\\
&{\mathcal{C}_{ij}} = \left\{ {\begin{array}{*{20}{l}}
{2\;\;\;\;\;\;i = j < M}\\
{ - 1\;\;\;\left| {i - j} \right| = 1,i \ne M}\\
{ - 2\;\;\;i = M,j = M - 1}\\
{0\;\;\;\;\;\;{\rm{others}}}
\end{array}} \right.\nonumber\\
&{\left| \rho  \right|^2} = \left\langle {\rho ,\rho } \right\rangle  =\frac{1}{{12}}M(M + 1)(4M - 1)
\end{align}
The lattice is defined as
\begin{align}
&\Lambda (G) = \{ \sum\limits_{j = 1}^M {{\lambda _j}{\omega _j}} :{\lambda _j} \in \mathbb{Z}\} \nonumber\\
&P(G) = \{ \sum\limits_{j = 1}^M {{\lambda _j}{e_j}} :{\lambda _i} \in \mathbb{Z}/2,{\lambda _i} - {\lambda _j} \in \mathbb{Z}\} 
\end{align}
and $\phi\in \mathbb{R}^M$ and $\mu \in P(G)+\sigma_i$ (where $\mu$ is chosen to be dominate). In any $M$, we have two possible spin structures, $\sigma_0=0$ and $\sigma_1=\omega_{M-1}/2$. 

The result for single charge sector is pretty similar with $\text{SU}(M+1)$ case. We have
\begin{itemize}
\item Large $\frac{K}{\beta}$, relatively small $M$ and $\mu$. Since we know that 
\begin{align}
\sum\limits_{i = 1}^{M^2} {{q_i}}  = {M^2}
\end{align}
thus
\begin{align}
\sum\limits_{i = 1}^{M^2} {\frac{{{q_i} + 1}}{2}}  = M^2
\end{align}
So we get
\begin{align}
{Z_\sigma }(\beta ,\mu ) \sim {\left( {\frac{K}{\beta }} \right)^{M^2}}
\end{align}
\item Large $\frac{K}{\beta}$, $M$ is kept to be small, but some $\mu_i$ is sufficiently large, $\mu_i^2\sim \frac{K}{\beta}$, we have
\begin{align}
{Z_\sigma }(\beta ,\mu ) \sim {e^{ - \#  \times \frac{{\beta \mu _i^2}}{K}}}
\end{align}
\item Large $M$ will provide even faster decaying rate.
\end{itemize}
\textbf{Whole sector: }The corresponding data is
\begin{align}
&\text{det}\mathcal{C}=2~~~~~~n=(2M+1)M\nonumber\\
&p=M^2~~~~~~r=M\nonumber\\
&c ={(2\pi )^{{M^2} + M}}\sqrt 2\left( \prod\nolimits_{k = 0}^{M - 1} {{{\left( {(2k + 2)(2k + 3)} \right)}^{M - 1 - k}}} \right)^{-1}
\end{align}
We could write down the $\Theta$ function as 
\begin{align}
{\Theta _{\sigma _i}}(\beta ,\phi ) = \sum\limits_{\mu  \in \Lambda (G)} {{e^{2\pi i\langle \rho+\sigma_i ,\mu \rangle }}{e^{ - \frac{K}{{4\beta }}\langle \phi  + 2\pi \mu ,\phi  + 2\pi \mu \rangle }}\prod\limits_{\alpha  \in {R_ + }} {\frac{{\langle \alpha ,\phi  + 2\pi \mu \rangle }}{{2\sin (\langle \alpha ,\phi  + 2\pi \mu \rangle /2)}}} } 
\end{align}
There are some differences between the $\text{SO}(2M+1)$ case and the $\text{SU}(M+1)$ case we discuss above. Since it has two different spin structures, they may change signatures in different terms of partition function sum. Let us take a look on $M=2$ case for example. In this case, we assume $\phi=2\pi\eta _i\omega_i$ and $\mu=\mu_i\omega_i$. We have
\begin{align}
&2\pi i\left\langle  {\rho  + {\sigma _0},{\mu}} \right\rangle  = -2\pi i{\mu _1} + 5\pi i{\mu _2}\sim\pi i \mu_2\nonumber\\
&2\pi i\left\langle { \rho  + {\sigma _1},\mu} \right\rangle  = 4\pi i{\mu _1}\sim0
\end{align}
where here $\sim$ means modulo $2\pi i\mathbb{Z}$. Thus we know that for ${\sigma _1}$ the phase is always 1, while for $\sigma_0$ it depends on $\mu_2$ is even or odd.

Similarly, considering $M=2$ and for simplicity we assume $\sigma=\sigma_0$. If there is no solution for the following equation
\begin{align}
{{\langle \alpha ,\phi  + 2\pi \mu \rangle }}=0
\end{align}
namely, there is no solution for any of the following equations
\begin{align}
&{\eta _2} + {\mu _2} = 0\nonumber\\
&- 2{\eta _1} + {\eta _2} - 2{\mu _1} + {\mu _2} = 0\nonumber\\
&- {\eta _1} + {\eta _2} - {\mu _1} + {\mu _2} = 0\nonumber\\
&- 2{\eta _1} + 3{\eta _2} - 2{\mu _1} + 3{\mu _2} = 0
\end{align}
then 
\begin{align}
{Z_\sigma }(\beta |\phi ) \sim   {\left( {\frac{K}{\beta }} \right)^5}{e^{\frac{{5\beta }}{{12K}}}}{e^{ - \frac{K}{{4\beta }}\text{Min}{_\mu }\left\langle {\phi  + 2\pi \mu ,\phi  + 2\pi \mu } \right\rangle }}
\end{align}
If there exists solution, for instance, say if $\eta_2\in \mathbb{Z}$, then consider the solution $\eta_2=-\mu_2$ we get
\begin{align}
&\prod\limits_{\alpha  \in {R_ + }} {\frac{{\langle \alpha ,\phi  + 2\pi \mu \rangle }}{{2\sin (\langle \alpha ,\phi  + 2\pi \mu \rangle /2)}}}  \sim \frac{{{{(\pi ({\eta _1} + {\mu _1}))}^3}}}{{{{\sin }^3}(2\pi ({\eta _1} + {\mu _1}))}}\nonumber\\
&\frac{K}{{4\beta }}\left\langle {\phi  + 2\pi \mu ,\phi  + 2\pi \mu } \right\rangle  = \frac{2K}{\beta }{(\pi ({\eta _1} + {\mu _1}))^2}
\end{align}
Then if $\eta_2$ is not integer we have
\begin{align}
{Z_\sigma }(\beta |\phi )\sim \pm {\left( {\frac{K}{\beta }} \right)^5}{e^{\frac{{5\beta }}{{12K}}}}{e^{ - \frac{2K}{\beta }\left( {\pi ({\eta _2} - {\rm{Round}}({\eta _2}))} \right)}}^2
\end{align}
If $\eta_2$ is an integer, we have
\begin{align}
{Z_\sigma }(\beta |\phi )\sim \pm {\left( {\frac{K}{\beta }} \right)^5}{e^{\frac{{5\beta }}{{12K}}}}
\end{align}
These facts will happen in general, where the generic form is expected to be
\begin{align}
{Z_\sigma }(\beta |\phi )\sim {\left( {\frac{K}{\beta }} \right)^{(2M + 1)M/2}}{e^{\frac{{(2M + 1)M\beta }}{{24K}}}}{e^{ - \frac{K}{{4\beta }}{\rm{Mi}}{{\rm{n}}_\mu }\left( {\langle \phi  + 2\pi \mu ,\phi  + 2\pi \mu \rangle } \right)}}
\end{align}
where if we get poles as described above, the minimizing function will only localized on those poles. And we also have a similar bound
\begin{align}
{\left( {\frac{K}{\beta }} \right)^{(2M + 1)M/2}}{e^{\frac{{(2M + 1)M\beta }}{{24K}}}}{e^{ - \frac{{KM}}{{4\beta }} \times \mathcal{O}(1)}} \lessapprox {Z_\sigma }(\beta |\phi ) \lessapprox {\left( {\frac{K}{\beta }} \right)^{(2M + 1)M/2}}{e^{\frac{{(2M + 1)M\beta }}{{24K}}}}
\end{align}
\subsection{$\text{SO}(2M)$}
\textbf{Single charge sector: }We list the data we need to use here
\begin{align}
&{\omega _i} = \left\{ {\begin{array}{*{20}{c}}
{{e_i} - {e_{i + 1}}}&{1 \le i \le M - 1}\\
{{e_{M - 1}} + {e_M}}&{i = M}
\end{array}} \right.\nonumber\\
&{R_ + } = \left\{ {{e_i} \pm {e_j}:i < j} \right\}~~~~~~\left| {{R_ + }} \right| = p = M(M - 1)\nonumber\\
&\rho  = 2(M - j){e_j}~~~~~~{\left| \rho  \right|^2} = \frac{1}{6}(M - 1)M(2M - 1)\nonumber\\
&{{\mathcal{C}}_{ij}} = \left\{ {\begin{array}{*{20}{l}}
{2\;\;\;\;\;\;i = j < M}\\
{ - 1\;\;\;\left| {i - j} \right| = 1,i,j \ne M}\\
{ - 1\;\;\;(i,j) = (M,M - 2)\text{ or }(M - 2,M)}\\
{0\;\;\;\;\;\;{\rm{others}}}
\end{array}} \right.
\end{align}
and we know the lattices are 
\begin{align}
&\Lambda (G) = \{ \sum\limits_{j = 1}^M {{\lambda _j}{\omega _j}} :{\lambda _j} \in \mathbb{Z}\} \nonumber\\
&P(G) = \{ \sum\limits_{j = 1}^M {{\lambda _j}{e_j}} :{\lambda _i} \in \mathbb{Z}/2,{\lambda _i} - {\lambda _j} \in \mathbb{Z}\} 
\end{align}
and again, we know that $\phi\in \mathbb{R}^M$ and $\mu\in P(G)_{\ge 0}$. Moreover, we have four spin structures 
\begin{align}
&\sigma_0=0~~~~~~{\sigma _1} = \sum\limits_{i = 1}^M {\frac{1}{2}{\omega _i}} \nonumber\\
&{\sigma _2} = {\omega _1}~~~~~{\sigma _3} = \left\{ {\begin{array}{*{20}{c}}
{ - \frac{1}{2}{\omega _1} + \sum\limits_{i = 2}^M {\frac{1}{2}{\omega _i}} }&{M\text{ is even}}\\
{\sum\limits_{i = 2}^M {\frac{3}{2}{\omega _i}} }&{M\text{ is odd}}
\end{array}} \right.
\end{align}
where when $M$ is even the spin structures form the group $\mathbb{Z}_2\times \mathbb{Z}_2$, while when $M$ is odd the spin structures form the group $\mathbb{Z}_4$.

The above discussion has the restriction that $M\ge 2$. For the reduced case $M=1$, $\text{SO}(2)=\text{U}(1)$, thus it is not semisimple, we will discuss it later.

The generic feature of the result for single charge sector is the same as before. The only difference is that now for large $\frac{K}{\beta}$ but relatively small $\mu$ and $M$, since
\begin{align}
\sum\limits_{i = 1}^{ M(M - 1)} {{q_i}}  = M(M - 1)
\end{align}
so 
\begin{align}
\sum\limits_{i = 1}^{ M(M - 1)} {\frac{{{q_i} + 1}}{2}}  = M(M - 1)
\end{align}
Thus the partition function in this limit is given by
\begin{align}
{Z_\sigma }(\beta ,\mu ) \sim {\left( {\frac{K}{\beta }} \right)^{ M(M - 1)}}
\end{align}
\\
\\
\textbf{Whole sector: }The constants are
\begin{align}
&\det \mathcal{C} = 4~~~~~~n = (2M - 1)M\nonumber\\
&p = M(M - 1)~~~~~~r = M\nonumber\\
&c = 2{(2\pi )^{{M^2}}}{\left( {(M - 1)!\prod\nolimits_{s = 1,3,5 \ldots }^{2M - 3} {s!} } \right)^{ - 1}}
\end{align}
One of the main differences comparing to previous cases is that now we are four spin structures. Generically, we conclude that more spin structures may lead to more complicated cases in the phases of terms for summation.

The form of the partition function is
\begin{align}
{Z_\sigma }(\beta |\phi )\sim{\left( {\frac{K}{\beta }} \right)^{(2M - 1)M/2}}{e^{\frac{{(2M - 1)M\beta }}{{24K}}}}{e^{ - \frac{K}{{4\beta }}{\rm{Mi}}{{\rm{n}}_\mu }\left( {\langle \phi  + 2\pi \mu ,\phi  + 2\pi \mu \rangle } \right)}}
\end{align}
with a similar bound
\begin{align}
{\left( {\frac{K}{\beta }} \right)^{(2M - 1)M/2}}{e^{\frac{{(2M - 1)M\beta }}{{24K}}}}{e^{ - \frac{{KM}}{{4\beta }} \times {\cal O}(1)}} \mathbin{\lower.3ex\hbox{$\buildrel<\over
{\smash{\scriptstyle\approx}\vphantom{_2}}$}} {Z_\sigma }(\beta |\phi ) \mathbin{\lower.3ex\hbox{$\buildrel<\over
{\smash{\scriptstyle\approx}\vphantom{_2}}$}} {\left( {\frac{K}{\beta }} \right)^{(2M - 1)M/2}}{e^{\frac{{(2M - 1)M\beta }}{{24K}}}}
\end{align}
\subsection{Non-semisimple cases}
For non-semisimple case, a standard example is $\text{U}(M+1)$ ($M \ge 0$). Formally, the theorems above are not applied for generic non-semisimple groups, but since  $\text{U}(M+1)$ is a combination of $\text{U}(1)$ and $\text{SU}(M+1)$, we could still use it practically by merging the result of $\text{U}(1)$ and $\text{SU}(M+1)$ together. 

We will revisit the single charge sector and the whole sector partition functions for $\text{U}(1)$ in the following, and make some predictions for $\text{U}(M+1)$ in general.
\\
\\
\textbf{$\text{U}(1)$: }We revisit our $\text{U}(1)$ case in our mathematical framework. As we know, $\text{U}(1)$ has exactly two spin structures: the trivial one $\sigma_0$, and the M\"obius $\sigma_1$. If one identifies the weight lattice $P(\text{U}(1))$ of $\text{U}(1)$ with $\mathbb Z$, and the inner product $\langle -,-\rangle$ is just multiplication of numbers, then $\sigma_0$ can be chosen to be represented by 1, and $\sigma_1$ can be chosen to be represented by $1/2$. Note that since there is no semisimple component in $\text{U}(1)$, $R_+$ is an empty set.

Now applying the single charge sector formula we directly obtain
\begin{align}
{Z_\sigma }(\beta ,\mu ) = \exp ( - \frac{{\beta {\mu ^2}}}{K})
\end{align}
which precisely matches our previous observation. Secondly, for the whole partition function we have
\begin{align}
&{Z_{{\sigma _0}}}(\beta |\phi ) \sim {(\frac{K}{\beta })^{1/2}}\sum\limits_\mu  {{e^{ - \frac{K}{{4\beta }}{{(\phi  + 2\pi \mu )}^2}}}} \nonumber\\
&{Z_{{\sigma _1}}}(\beta |\phi ) \sim {(\frac{K}{\beta })^{1/2}}\sum\limits_\mu  {{{( - 1)}^\mu }{e^{ - \frac{K}{{4\beta }}{{(\phi  + 2\pi \mu )}^2}}}} 
\end{align}
which matches our results before.
\\
\\
\textbf{$\text{U}(M+1)$: }The spin structures of $\text{U}(M+1)$ are non-trivial. Generically, there are two spin structures for generic $M$. In fact, since
\begin{align}
{\pi _1}({\rm{U}}(M+1)) = {\pi _1}({\rm{U}}(1)) = \mathbb{Z}
\end{align}
thus the spin structure for $\text{U}(M)$ is 
\begin{align}
\text{Hom}({\pi _1}({\rm{U}}(M+1)),{\mathbb{Z}_2}) = {\mathbb{Z}_2}
\end{align}

For other group data, we should merge $\text{U}(1)$ and $\text{SU}(M+1)$ together. For instance, consider $\text{U}(2)$. The structure of the group is $\text{U}(2)=\text{SU}(2)\times \text{U}(1)/\mathbb{Z}_2$. Thus, we may effectively take the products of lattices and Cartan matrices, considering the equivalence relationship provided by $\mathbb{Z}_2$. Thus, for instance, the positive simple roots are given by
\begin{align}
{R_ + } = \left\{ {({\omega _1},0)} \right\}
\end{align}
where $\omega_1$ is from $\text{SU}(2)$. The Cartan matrix is 
\begin{align}
\mathcal{C} = \text{diag}(2,1)
\end{align}
In the merging procedure we know that for generic $M$, 
\begin{align}
p = \left| {R_ + ^{{\rm{U}}(M + 1)}} \right| = \left| {R_ + ^{{\rm{SU}}(M + 1)}} \right| = \frac{{M(M + 1)}}{2}
\end{align}
Thus, the scaling of the partition function in the single charge sector, for $\text{U}(M+1)$, is the same as $\text{SU}(M+1)$, since $\sum_i q_i$ is bounded by the same number. For the whole sector, we have the bound
\begin{align}
{\left( {\frac{K}{\beta }} \right)^{{{(M + 1)}^2}/2}}{e^{\frac{{{{(M + 1)}^2}\beta }}{{24K}}}}{e^{ - \frac{{KM}}{{4\beta }} \times {\cal O}(1)}} \mathbin{\lower.3ex\hbox{$\buildrel<\over
{\smash{\scriptstyle\approx}\vphantom{_2}}$}} {Z_\sigma }(\beta |\phi ) \mathbin{\lower.3ex\hbox{$\buildrel<\over
{\smash{\scriptstyle\approx}\vphantom{_2}}$}} {\left( {\frac{K}{\beta }} \right)^{{{(M + 1)}^2}/2}}{e^{\frac{{{{(M + 1)}^2}\beta }}{{24K}}}}
\end{align}
\subsection{Generic features}\label{fe}
After going through the explicit examples of groups, we could summarize some generic features of our result. 
\\
\\
\textbf{Product manifolds: }For a product manifold $M_1\times M_2$, the partition function is also a product $Z_{M_1\times M_2}=Z_{M_1}Z_{M_2}$. This can be seen easily from the fact that the Laplace-Beltrami operator on the product manifold is the summation:
\begin{align}
\Delta_{M_1\times M_2}=\Delta_{M_1}+\Delta _{M_2}
\end{align}
so eigenfunctions are products of eigenfunctions and eigenvalues are summations of eigenvalues
\begin{align}
\psi_{n,m}=\psi_n\psi_m\quad~~~~~~\quad E_{n,m}=E_n+E_m
\end{align}
Plugging this into equation \ref{Expansion}, we get 
\begin{align}
Z_{M_1\times M_2}&=\sum_{n,m}\psi_{n,m}(\mathbf{1})\bar{\psi}_{n,m}(\mathbf{1})e^{-2\pi \beta E_{n,m}}\nonumber\\
&=\sum_{n,m}\psi_{n}(\mathbf{1})\psi_{m}(\mathbf{1})\bar{\psi}_{n}(\mathbf{1})\bar{\psi}_{m}(\mathbf{1})e^{-2\pi \beta (E_n+E_m)}\nonumber\\
&=Z_{M_1}Z_{M_2}
\end{align}
Those formulas should work for single charge sector partition functions. One could use resummation formula to obtain the whole sector result as above. 
\\
\\
\textbf{Triviality for $K\ll \beta$}: In this limit all partition functions reduce to $\mathcal{O}(1)$ constants.
\\
\\
\textbf{Single charge sector for $K\gg \beta$: }Generically we expect the following results,
\begin{itemize}
\item For group $G$ with small dimension and rank, and small absolute values of charge sectors, Given the number of positive roots $\abs{R_+}=p$, the partition function is expected to be
\begin{align}
{Z_\sigma }(\beta ,\mu ) \sim {\left( {\frac{K}{\beta }} \right)^{p}}
\end{align}
Here we should note that $\text{U}(1)$ also follows from this formula, since for $\text{U}(1)$, $p=0$, and in that limit we have ${Z_\sigma }(\beta ,\mu ) \sim 1$.
\item For small $M$, but some $\mu_i$ is comparable to $\frac{K}{\beta}$, we expect an exponential decay
\begin{align}
{Z_\sigma }(\beta ,\mu ) \sim \exp \left( { - \# \frac{{\beta \mu _i^2}}{K}} \right)
\end{align}
\item Large $M$ will make the decay rate larger.
\item The result for semisimple group $G$ is not related to spin structure $\sigma$. 
\end{itemize}
\textbf{Whole sector for $K\gg \beta$: }Generically we expect the following results,
\begin{itemize}
\item Generically, for group $G$, the result is expected to be,
\begin{align}
{Z_\sigma }(\beta |\sigma ) \sim {\left( {\frac{K}{\beta }} \right)^{n/2}}{e^{\frac{{n\beta }}{{24K}}}}{e^{ - \frac{K}{{4\beta }}{\rm{Mi}}{{\rm{n}}_\mu }\left( {\left\langle {\phi  + 2\pi \mu ,\phi  + 2\pi \mu } \right\rangle } \right)}}
\end{align}
where minimization works on the lattice $\Lambda(G)$, and the dimension of the group is given by $n$. In the case that the following equation
\begin{align}
\left\langle {\alpha ,\phi  + 2\pi \mu } \right\rangle =0
\end{align}
has solutions, where $\alpha$ is from positive roots $R_+$, the minimization is taken only over those solutions.
\item Thus we generically have a bound
\begin{align}
{\left( {\frac{K}{\beta }} \right)^{n/2}}{e^{\frac{{n\beta }}{{24K}}}}{e^{ - \frac{{Kr}}{{4\beta }} \times {\cal O}(1)}} \mathbin{\lower.3ex\hbox{$\buildrel<\over
{\smash{\scriptstyle\approx}\vphantom{_2}}$}} {Z_\sigma }(\beta |\phi ) \mathbin{\lower.3ex\hbox{$\buildrel<\over
{\smash{\scriptstyle\approx}\vphantom{_2}}$}} {\left( {\frac{K}{\beta }} \right)^{n/2}}{e^{\frac{{n\beta }}{{24K}}}}
\end{align}
\item The non-trivial spin structures will change the phase factor in the overall sum. However, in the dominate large $\frac{K}{\beta}$ regime it will only give an overall constant.

\end{itemize}

\section{How symmetry plays with chaos}\label{pred}
Based on the analysis above, we could obtain some predictions using an effective action that is simply combined from a Schwarzian theory and a \emph{particle on a group} theory, which is called $\text{Sch}_G$.

\subsection{Form factors}
Spectral form factor is an important quantity to quantify the discreteness of the spectrum in a random systems, which could be useful to understand properties of quantum gravity in the black hole and information scrambling in the quantum many-body system. For instance, the two point form factor is defined as the product of the analytic-continued partition function
\begin{align}
{\mathcal{R}_2}(\beta ,t) = \left\langle {Z(\beta  + it)Z(\beta  - it)} \right\rangle 
\end{align}
This quantity is widely studied recently, especially in the context of SYK model (for instance, see \cite{Cotler:2016fpe,Saad:2018bqo}). In the Schwarzian theory of the SYK model, we have the analytic control in the limit
\begin{align}
\vartheta J\equiv \sqrt{\beta^2+ t^2}J \gg 1
\end{align}
In this timescale, we have a specific decaying epoch where (see figure \ref{fig1})
\begin{figure}[htbp]
  \centering
  \includegraphics[width=0.4\textwidth]{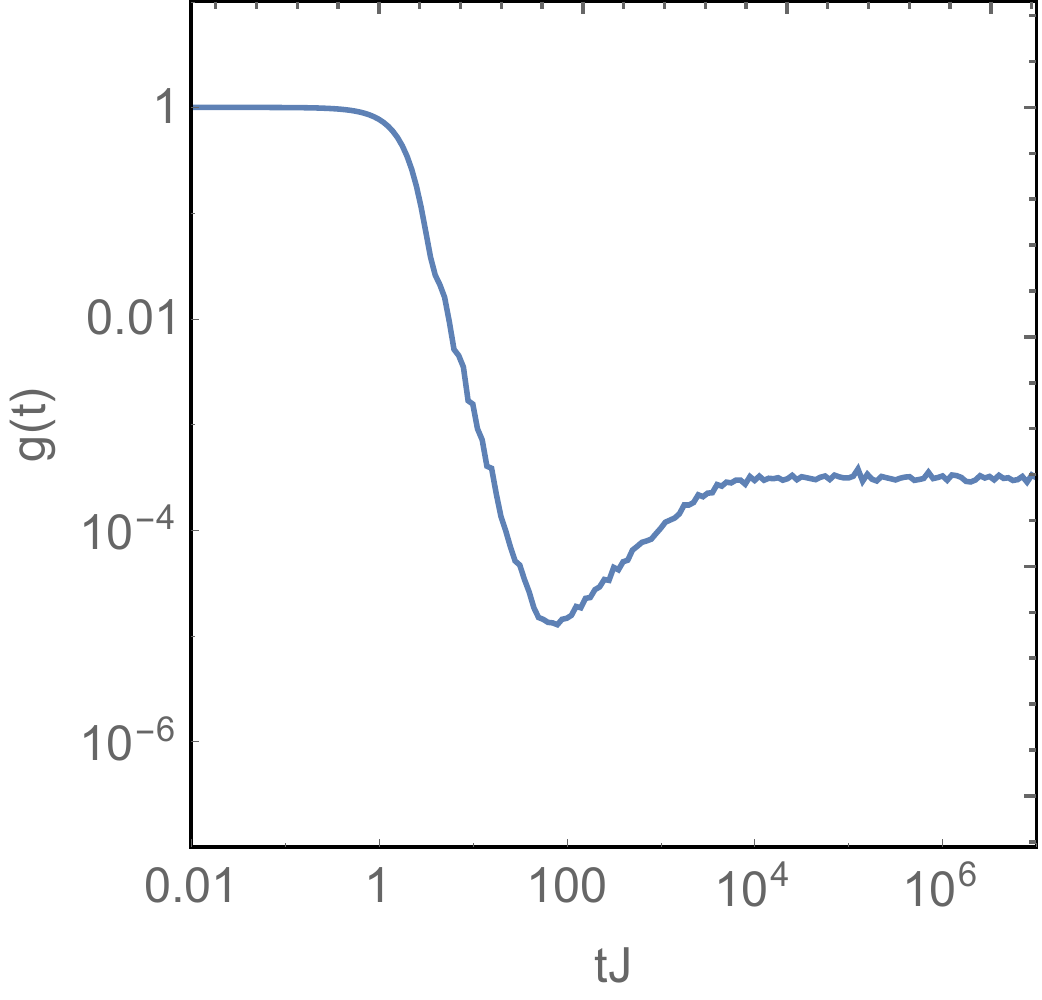}
  \caption{\label{fig1} Example of spectral form factor $\mathcal{R}_2(\beta,t)$ in SYK model. We take $\beta J=1$ and $N=24$ with 800 random realizations.}
\end{figure}
\begin{align}
{\mathcal{R}_2}(\beta ,t) \sim {\left| {\left\langle {Z(\beta  + it)} \right\rangle } \right|^2}
\end{align}

We know that when $\vartheta J \gg N$, there is no contribution from symmetries, and we simply obtain the power law,
\begin{align}
{\mathcal{R}_2}(\beta ,t)\sim \frac{1}{(\vartheta J)^3}
\end{align}
For  $\vartheta J \ll N$ we have the following Table \ref{tb1}.
\begin{table}[h!]
  \begin{center}
    \caption{The spectral form factor for $\text{Sch}_G$ in the window $1\ll \vartheta J \ll N$.}
    \label{tb1}
    \begin{tabular}{c|c|c|c} % <-- Alignments: 1st column left, 2nd middle and 3rd right, with vertical lines in between
      Group& Number of&{Single sector} &Whole sector \\
      $G$ &spin structures& small $r$ and $\mu$  & general form \\
      \hline
      General $G$ &$\left| {{{\rm{H}}^1}(G,{\mathbb{Z}_2})} \right|$& $\frac{1}{(\vartheta J)^{2p+3}}$  & $\frac{{{e^{\frac{{\# n\vartheta J}}{{N}} - \frac{{\# N}}{{\vartheta J}}{\rm{Mi}}{{\rm{n}}_\mu }{{\left| {\phi  + 2\pi \mu } \right|}^2}}}}}{{{{(\vartheta J)}^{n + 3}}}}$\\
      $\text{U}(1)=\text{SO}(2)$ &$2$& $\frac{1}{(\vartheta J)^{3}}$ & $\frac{{{e^{ - \frac{{\# N}}{{\vartheta J}}{\rm{Mi}}{{\rm{n}}_\mu }{{\left| {\phi  + 2\pi \mu } \right|}^2}}}}}{{{{(\vartheta J)}^4}}}$\\
      $\text{SU}(2)$ & 1 & $\frac{1}{(\vartheta J)^5}$   & $\frac{{{e^{ - \frac{{\# N}}{{\vartheta J}}{\rm{Mi}}{{\rm{n}}_\mu }{{\left| {\phi  + 2\pi \mu } \right|}^2}}}}}{{{{(\vartheta J)}^6}}}$\\
      $\text{SU}(M+1)_{M\ge 1}$ & 1 &$\frac{1}{(\vartheta J)^{M^2+M+3}}$   & $\frac{{{e^{\frac{{\# M(M+2)\vartheta J}}{{N}} - \frac{{\# N}}{{\vartheta J}}{\rm{Mi}}{{\rm{n}}_\mu }{{\left| {\phi  + 2\pi \mu } \right|}^2}}}}}{{{{(\vartheta J)}^{M^2+2M+ 3}}}}$\\    
      $\text{SO}(2M+1)_{M\ge 1}$ & 2 & $\frac{1}{(\vartheta J)^{2M^2+3}}$   & $\frac{{{e^{\frac{{\# (2M+1)M\vartheta J}}{{N}} - \frac{{\# N}}{{\vartheta J}}{\rm{Mi}}{{\rm{n}}_\mu }{{\left| {\phi  + 2\pi \mu } \right|}^2}}}}}{{{{(\vartheta J)}^{2M^2+M + 3}}}}$\\      
      $\text{SO}(2M)_{M\ge 1}$ & 4 & $\frac{1}{(\vartheta J)^{2M^2-2M+3}}$   & $\frac{{{e^{\frac{{\# (2M-1)M\vartheta J}}{{N}} - \frac{{\# N}}{{\vartheta J}}{\rm{Mi}}{{\rm{n}}_\mu }{{\left| {\phi  + 2\pi \mu } \right|}^2}}}}}{{{{(\vartheta J)}^{2M^2-M + 3}}}}$\\     
      $\text{U}(M+1)_{M\ge 1}$ & 2 &$\frac{1}{(\vartheta J)^{M^2+M+3}}$   & $\frac{{{e^{\frac{{\# (M+1)^2\vartheta J}}{{N}} - \frac{{\# N}}{{\vartheta J}}{\rm{Mi}}{{\rm{n}}_\mu }{{\left| {\phi  + 2\pi \mu } \right|}^2}}}}}{{{{(\vartheta J)}^{M^2+2M+ 4}}}}$\\   
      \end{tabular}
  \end{center}
\end{table}
From this table, we obtain predictions of the form factors. We will comment on this in the following:
\begin{itemize}
\item The group $G$, if associating with SYK model, will provide extra fruitful dynamics in form factors. It is relatively easy to observe it with a clean decaying rate in the single charge sector, where the number of positive roots in the group $G$ will provide a contribution and make the decay procedure faster. It is also possible to observe it in the result of the form factor in the whole sector, where the dimension of the group $G$ will contribute and make the decay faster.
\item An exponential decay will happen in the single charge sector if a component of the sector $\mu_i$ is sufficiently large, and we have
\begin{align}
\mathcal{R}_2\sim \exp\left(-\frac{\#\mu^2\vartheta}{K}\right)
\end{align}
\item The spin structures $\sigma$ generically are hard to change the scaling of the spectral form factor. 

\end{itemize}

\subsection{Partition function and density of states}
We briefly comment on the thermodynamical implications of the partition function result in this section. 

Generically the partition function over the given chemical potential $\phi$ is given by
\begin{align}
{Z }(\beta |\phi ) = {\rm{Tr}}({e^{ - \beta H + i\phi Q}})
\end{align}
where $Q$ is the charge operator. In the grand canonical ensemble, the density matrix is given by
\begin{align}
\rho_\text{den}(\beta|\phi)  = \frac{{{e^{ - \beta H + i\phi Q}}}}{{{Z_\sigma }(\beta |\phi )}}
\end{align}
One can do the low temperature expansion of the partition function, and we obtain
\begin{align}
{Z }(\beta |\phi ) = \exp \left( { - \beta {H_0}(\phi ) + i\phi {Q_0}(\phi ) + {S_0}(\phi ) + \frac{{c_0(\phi )}}{{2\beta }} + {\rm{corrections}}} \right)
\end{align}
where for observable $X$, we define
\begin{align}
{X_0} = \operatorname{Tr}(X\rho (\beta  = 0|\phi ))
\end{align}
Namely, $H_0$, $S_0$, $Q_0$, $c_0$ are energy, entropy, charge, specific heat in the state with chemical potential $\phi$ and zero temperature. 

The last term, \emph{corrections}, is obtained from the effective actions. Generically, we would say that for $\text{Sch}_G$ with dimension $\dim G=n$, we have
\begin{align}
{\rm{corrections}} \sim \frac{{n + 3}}{2}\log \beta J
\end{align}

We perform the Fourier transform to obtain the partition function in the single charge sector by (here we take $\text{U}(1)$ for simplicity)
\begin{align}
Z(\beta ,\mu ) = \int_0^{2\pi } {\frac{{d\phi }}{{2\pi }}} {e^{ - i\phi \mu }}Z(\beta |\phi )
\end{align}
In canonical ensemble, we have a similar low temperature expansion in the single charge sector
\begin{align}
Z(\beta ,\mu ) = \exp \left( { - \beta {H_\mu } + {S_\mu } + \frac{{{c_\mu }}}{{2\beta }} + {\rm{corrections'}}} \right)
\end{align}
where we define $X_\mu$ to be the operator $X$ in zero temperature, and charge sector $\mu$. Namely, $H_\mu$, $S_\mu$, $Q_\mu$, $c_\mu$ are energy, entropy, charge, specific heat in the state with charge sector $\mu$ and zero temperature. For corrections, in the single charge sector case, for $\text{U}(1)$ we have
\begin{align}
{\rm{corrections'}} \sim \frac{{3}}{2}\log \beta J
\end{align}
which is the same as the Schwarzian theory, while in general, we have
\begin{align}
{\rm{corrections'}} \sim \frac{{2p+3}}{2}\log \beta J
\end{align}
where $p=|R_+|$, the number of positive roots in $G$. 

In general, the quantities $X_0(\phi)$ and $X_\mu$, could be computed by numerical analysis. We look forward to seeing those developments in the future. 

There is another interesting thermodynamical observable we could look at, which is the density of states. The density of states is given by the Laplace transform of the temperature
\begin{align}
&\rho (E|\phi ) \sim \frac{1}{{2\pi i}}\int_{\gamma  + i\mathbb{R}}^{} {d\beta Z(\beta |\phi )\exp (\beta E + \frac{\# }{\beta })} \nonumber\\
&\rho (E,\mu ) \sim \frac{1}{{2\pi i}}\int_{\gamma  + i\mathbb{R}}^{} {d\beta Z(\beta ,\mu )\exp (\beta E + \frac{\# }{\beta })} 
\end{align}
where $\gamma$ is an arbitrary real constant. Using saddle point approximation, we know that in general, setting $Z\sim \beta^{-\alpha}$, for small $E$ we have
\begin{align}
&\int {d\beta \frac{1}{{{\beta ^\alpha }}}\exp (\beta E + \frac{c}{\beta })} \nonumber\\
&\sim {\left( {\frac{{eE}}{\alpha }} \right)^\alpha }\int {d\beta \exp (\frac{1}{2}\frac{{{E^2}}}{\alpha }{{\beta}^2})}  \nonumber\\
&\sim {E^{\alpha  - 1}}
\end{align}
while for large $E$ we have
\begin{align}
&\int {d\beta \frac{1}{{{\beta ^\alpha }}}\exp (\beta E + \frac{c}{\beta })} \nonumber\\
&\sim {\left( {\frac{E}{c}} \right)^{\alpha /2}}{e^{2\sqrt {cE} }}\int {d\beta \exp (\frac{1}{2}\frac{{{E^{3/2}}}}{{\sqrt c }}{{(\beta  - {\beta _0})}^2})} \nonumber\\
&\sim {e^{2\sqrt {cE} }}{E^{\alpha /2 - 3/4}}
\end{align}
Thus, with our previous result, we know that for small $E$ we have
\begin{align}
&\rho (E|\phi ) \sim {(EJ)^{n/2+1/2}}\nonumber\\
&\rho (E,\mu ) \sim {(EJ)^{p+1/2}}
\end{align}
while for large $E$ we have
\begin{align}
&\rho (E|\phi ) \sim {(EJ)^{n/4}}\nonumber\\
&\rho (E,\mu ) \sim {(EJ)^{p/2}}
\end{align}
For reader's convenience, we will list the energy dependence on $E$ in the following Table \ref{tb2}.
\begin{table}[h!]
  \begin{center}
    \caption{The density of states for $\text{Sch}_G$}
    \label{tb2}
    \begin{tabular}{c|c|c|c|c} % <-- Alignments: 1st column left, 2nd middle and 3rd right, with vertical lines in between
      Group $G$& {Single sector}  &Whole sector & {Single sector}  &Whole sector\\
      & small $E$ & small $E$ & large $E$ & large $E$ \\
      \hline
      General $G$&$(EJ)^{p+1/2}$&$(EJ)^{n/2+1/2}$& $(EJ)^{p/2}$& $(EJ)^{n/4}$\\       
      $\text{U}(1)=\text{SO}(2)$&$(EJ)^{1/2}$&$(EJ)^{1}$&1 & $(EJ)^{1/4}$ \\
      $\text{SU}(2)$ &$(EJ)^{3/2}$&$(EJ)^{2}$&  $(EJ)^{1/2}$ & $(EJ)^{3/4}$\\     
       $\text{SU}(M+1)_{M\ge 1}$ &$(EJ)^{(M^2+M+1)/2}$&$(EJ)^{(M+1)^2/2}$& $(EJ)^{(M^2+M)/4}$ &$(EJ)^{(M^2+2M)/4}$\\ 
      $\text{SO}(2M+1)_{M\ge 1}$ &$(EJ)^{(2M^2+1)/2}$&$(EJ)^{(2M^2+M+1)/2}$& $(EJ)^{M^2/2}$&$(EJ)^{(2M^2+M)/4}$\\      
      $\text{SO}(2M)_{M\ge 1}$&$(EJ)^{(2M^2-2M+1)/2}$&$(EJ)^{(2M^2-M+1)/2}$&   $(EJ)^{(M^2-M)/2}$&$(EJ)^{(2M^2-M)/4}$\\  
      $\text{U}(M+1)_{M\ge 1}$&$(EJ)^{(M^2+M+1)/2}$&$(EJ)^{(M^2+2M+2)/2}$&$(EJ)^{(M^2+M)/4}$ &$(EJ)^{(M+1)^2/4}$\\ 
      \end{tabular}
  \end{center}
\end{table}

\subsection{A short comment on thermodynamics}
Here we briefly discuss other thermodynamical quantities of the theory. Here we will focus on the canonical ensemble. In our current language, the free energy in the thermodynamical limit is defined by 
\begin{align}
dF = dU - \frac{1}{\beta }dS + \frac{i}{\beta }\phi d\mu 
\end{align}
where here $S$ is the entropy, $U$ is the internal energy, and $\phi d\mu$ is understood as the inner product over lattice vectors in general. Thus, the chemical potential in equilibrium is defined by 
\begin{align}
\phi  =  - i\beta {\left( {\frac{{\partial F}}{{\partial \mu }}} \right)_\beta }
\end{align}
where the partial derivative here is understood as derivatives on each component of $\mu$. One might also define the grand potential
\begin{align}
\Omega (\phi ,\beta ) = F(\phi ,\beta ) - i\phi \mu (\phi ,\beta )
\end{align}
In the above discussions, we know that 
\begin{align}
Z(\beta ,\mu ) = \exp \left( { - \beta {H_\mu } + {S_\mu } + \frac{{{c_\mu }}}{{2\beta }} + {\rm{corrections'}}} \right)
\end{align}
wherein this work we show that the correction terms are generically logarithmically depending on $\beta J$. The free energy is given by
\begin{align}
F(\beta ,\mu ) =  - \frac{1}{\beta }\log Z(\beta ,\mu ) = {H_\mu } - \frac{1}{\beta }{S_\mu } - \frac{{{c_\mu }}}{{2{\beta ^2}}} + \frac{{{\rm{corrections'}}(\beta )}}{\beta }
\end{align}
in the zero temperature expansion. And from those quantities we could predict other thermodynamical quantities. 
\\
\\
In the above expressions, the terms $H_\mu$, $S_\mu$ and $c_\mu$ depend on, generically, the model itself, while the correction is from the Schwarzian theory $\text{Sch}_G$, the low energy effective action that describes the conformal symmetry breaking in the SYK-like models. In the work \cite{Fu:2016vas}, it is discovered that $H_\mu$ is not universal, while $S_\mu$ is universal in the complex SYK model; namely, it only cares about the scaling dimension and the IR information, without high energy details. The observation of universality is definitely, an important ingredient that is from the property of conformal symmetry of the SYK-like models. In our work, since we only compute the correction terms that are logarithmically depending on the temperature (where the logarithmic piece is generically the effect of symmetry comes in), the fact of universality is not affected by the higher symmetries. Thus, we expect that for SYK-like models, the universality property of the zero temperature entropy should stay the same. The correction terms also cannot encounter the expression of the chemical potential, since there is generically no dependence for the $\text{Sch}_G$ in the single charge sector, on the charge itself. Thus, the chemical potential is dominated by conformal contributions. 
\\
\\
Since the Schwarzian theory directly describes the perturbations above the saddle point, we will expect that the theory $\text{Sch}_G$ will be directly related to the physical quantities that are directly related to perturbations. For instance, the susceptibility matrix of complex SYK model is studied in \cite{Fu:2016vas}, which is directly related to two-point correlation functions of the phase modes and the Schwarzian modes. We expect that we will have similar situations in the model with more general symmetries. Those physical quantities are again model dependent, which will be beyond the scope of this work. We leave those studies for future research.

\subsection{Lyapunov exponents}
A crucial fact of the SYK-like models is that when computing the out-of-time-ordered four-point function, the result will have a Lyapunov growth during the early period. The Lyapunov exponent saturates the chaos bound by Maldacena-Shenker-Stanford \cite{Maldacena:2015waa}
\begin{align}
\lambda_L\le \frac{2\pi}{\beta}
\end{align}
Namely, their chaotic features are maximal. This fact indicates a possible holographic dual of those models (see \cite{Shenker:2013pqa,Shenker:2013yza,Shenker:2014cwa} for reference).
\\
\\
We argue that the one-dimensional SYK model with global symmetries will still have the maximal chaotic exponent.
\begin{itemize}
\item The Schwarzian term in the effective action indicates a reparametrization symmetry in our theory, which means that a dimension two operator $h=2$ will appear in the conformal partial wave expansion of the four-point function. The $h=2$ will create a maximal Lyapunov growth in the chaotic regime. Since the Maldacena-Shenker-Stanford bound says that the $\frac{2\pi}{\beta}$ Lyapunov exponent is maximal, other contributions are not possible to increase the $h=2$ contribution.
\item Like $\text{U}(1)$, the charge operator has dimension $h=0$, that means that by shadow transformation $h\to 1-h$ we have $h=1$ contribution appear in the four-point function expansion. However, $h=0$ could never contribution any chaotic behavior. This argument is completely presented in the $\text{U}(1)$ case, see \cite{Bulycheva:2017uqj}.
\end{itemize}
As a conclusion, we expect that generically, the one-dimensional SYK model, attaching with a global symmetry, should still be maximally chaotic. A more detailed analysis of this point is left for future work. For concrete evidence, see for instance, \cite{Gross:2016kjj,Bulycheva:2017uqj,Yoon:2017nig}.
\section{Conclusion}\label{conc}
In this paper, we study various aspects of partition functions for free theory on the symmetry group $G$, and its implications on the SYK model and chaotic dynamics. Symmetry group $G$ will provide charge sectors in the Hamiltonians of theory, and thus allow us to define the canonical and grand canonical ensembles. We study behaviors of partition functions in different ensembles, namely, partition function in specific charge sectors and specific chemical potentials, and claim that those behaviors will affect some chaotic observables and related thermodynamics. For instance, those symmetry groups will generically make the scrambling faster, observed in the spectral form factors. 

Some possible future directions could be given as the following,
\begin{itemize}
\item It would be interesting to generalize formally how partition function behaves in the non-semisimple groups. 
\item It would be interesting to construct specific models corresponding to those symmetry classes and verify their behavior, analytically and numerically. 
\item It would be interesting to study more details about thermodynamics in single charge sectors, and importantly, using Schwarzian theory to compute correlation functions and make predictions in condensed matter systems, for instance, properties of thermoelectric transport.
\item It would be interesting to understand the meaning of those results in the dual gravity. Traditionally, people believe that global symmetry in CFT could be dual to gauge symmetry in AdS. One may address the dual gravitational theory of SYK-like models using the predictions from this paper.

\end{itemize}

\section*{Acknowledgments}
We thank Jordan Cotler, Dionysios Anninos, Nicholas Hunter-Jones, Davide Gaiotto, Yingfei Gu, John Preskill, Subir Sachdev, David Simmons-Duffin, Douglas Stanford, Beni Yoshida and Yi-Zhuang You for discussions and suggestions. JL especially thanks Douglas Stanford for his help on understanding the one-loop determinants of $\mathcal{N}=2$ supersymmetric SYK model and complex SYK models, and comments on the charge sectors. JL also acknowledges Yingfei Gu for his valuable discussions on the charge sectors of complex SYK models. JL is supported in part by the Institute for Quantum Information and Matter (IQIM), an NSF Physics Frontiers Center (NSF Grant PHY-1125565) with support from the Gordon and Betty Moore Foundation (GBMF-2644), and by the Walter Burke Institute for Theoretical Physics. YZ is supported by the graduate student program at the Perimeter Institute.

\end{document}